\newcommand{\ggd}{\mathrm{GGD}}
\newcommand{\ged}{\mathrm{GED}}
\newcommand{\cost}{\mathrm{Cost}}
\newcommand{\vol}{\mathrm{Vol}}
\newcommand{\len}[1]{{\lvert\overline{#1}\rvert}}
\title{Distance Measures for Geometric Graphs}
\author[1]{Sushovan Majhi\thanks{smajhi@berkeley.edu}}
\affil[1]{\small School of Information, University of California, Berkeley, USA}
\author[2]{Carola Wenk\thanks{cwenk@tulane.edu; partially supported by NSF grant
CCF 2107434.}}
\affil[2]{Department of Computer Science, Tulane University, New Orleans, USA}
\date{}
\begin{document}
\maketitle\vspace*{-3ex}
\begin{abstract}
A geometric graph is a combinatorial graph, endowed with a geometry that is
inherited from its embedding in a Euclidean space. Formulation of a meaningful
measure of (dis-)similarity in both the combinatorial and geometric structures
of two such geometric graphs is a challenging problem in pattern recognition. We
study two notions of distance measures for geometric graphs, called the
geometric edit distance (GED) and geometric graph distance (GGD). While the
former is based on the idea of editing one graph to transform it into the other
graph, the latter is inspired by inexact matching of the graphs. For decades,
both notions have been lending themselves well as measures of similarity between
attributed graphs. If used without any modification, however, they fail to
provide a meaningful distance measure for geometric graphs---even cease to be a
metric. We have curated their associated cost functions for the context of
geometric graphs. Alongside studying the metric properties of GED and GGD, we
investigate how the two notions compare. We further our understanding of the
computational aspects of GGD by showing that the distance is $\mathcal{NP}$-hard
to compute, even if the graphs are planar and arbitrary cost coefficients are
allowed.
\end{abstract}

\section{Introduction}
Graphs have been a widely accepted object for providing structural
representation of patterns involving relational properties. The framework of
representing complex and repetitive patterns using graphical structures can
facilitate their description, manipulation, and recognition. While hierarchical
patterns are commonly reduced to a string \cite{FU1971155} or a tree
representation \cite{1672247}, non-hierarchical patterns generally require a
graph representation.
One of the most important aspects of such representation is that the problem of
pattern recognition becomes the problem of quantifying (dis-)similarity between
a query graph and a model or prototype graph. The problem of defining a relevant
distance measure for a class of graphs has been looked into for almost five
decades now and has a myriad of applications including chemical structure
matching \cite{willett_similarity_1994}, fingerprint matching
\cite{raymond_effectiveness_2002}, face identification \cite{954601}, and symbol
recognition \cite{954603}. All these applications demand a reliable and
efficient means of comparing two graphs. A meaningful graph distance measure is
expected to yield a small distance implying similarity, and a large distance
revealing disparity.

Depending on the class of graphs of interest and the area of application,
several methods have been proposed. If the use case requires a perfect matching
of two graphs, then the problem of graph isomorphism can be considered
\cite{10.1145/321556.321562}; whereas, subgraph isomorphism can by applied for a
perfect matching of parts of two graphs. These techniques are not, however,
lenient with (sometimes minor) local and structural deformations of the two
graphs. To address this issue, several alternative distance measures have been
studied. We particularly investigate \emph{edit distance}
\cite{sanfeliu_distance_1983,justice_binary_2006} and \emph{inexact matching
distance} \cite{bunke_inexact_1983}. The former makes use of elementary edit
transformations (such as deletion, insertion, relabeling of vertices and edges),
while the latter is based on partially matching two graphs through an inexact
matching relation (\defref{pi}). And, the distance is defined as the minimum
cost of transforming or matching one graph to the other. Although these distance
measures have been battle-proven for attributed graphs (i.e., combinatorial
graphs with finite label sets), the formulations seem inadequate in providing
meaningful similarity measures for geometric graphs. 

A geometric graph  belongs to a special class of attributed graphs having an
embedding into a Euclidean space $\R^d$, where the vertex and edge labels are
inferred from the Euclidean locations of the vertices and Euclidean lengths of
the edges, respectively. In the last decade, there has been a gain in practical
applications involving comparison of geometric graphs. Examples include
road-network or map comparison \cite{akpw-mca-15}, detection of chemical
structures using their spatial bonding geometry, etc. In addition, large
datasets like \cite{da_vitoria_lobo_iam_2008} are being curated by pattern
recognition and machine learning communities.

Despite a rich literature on the matching of attributed graphs and a fair count
of algorithms benchmarked by both the database community and the pattern
recognition community, most of the frameworks become untenable for matching
geometric graphs. They remain oblivious to the spatial geometry such graphs are
endowed with, consequently giving rise to very \emph{artificial} measures of
similarity for geometric graphs. This is not surprising at all---geometric
graphs are a special class of labeled graphs after all! For a geometric graph,
the significant differences include: 
\begin{enumerate}[(i)]
\item Edge relabeling is not an independent edit operation, but vertex labels
dictate the incident edge labels.
\item Vertex relabeling amounts to its translation to a different location in
the ambient space, and additionally incurs the cost of relabeling of all its
adjacent edges.
\end{enumerate}

\subsection{Our Contribution}
We study two distance measures, the \emph{geometric edit distance} ($\ged$) and
\emph{geometric graph distance} ($\ggd$), in order to provide a meaningful
measure of similarity between two geometric graphs. For attributed graphs the
corresponding distance measures are equivalent as shown in \cite[Proposition
1]{bougleux_graph_2017}. In contrast, we show in \secref{compare} they are not
equivalent for geometric graphs. In addition to bounding each distance measure
by a constant factor of the other in \propref{compare}, we provide
polynomial-time computable bounds on them.

We mention here the contribution of \cite{cgkss-msgg-09} for introducing $\ggd$
as well as discussing different definitions of edit distance in the context of
geometric graphs.  The authors also prove certain complexity results for $\ggd$,
which we improve upon in this paper.
%
One of the major contributions of our study is to further our understanding of
the computational complexity of $\ggd$. In \cite{cgkss-msgg-09}, the
authors show that computing $\ggd$ is $\mathcal{NP}$-hard for non-planar graphs,
when arbitrary cost coefficients $C_V,C_E$ (as defined in \defref{ggd}) are
allowed. For planar graphs, $\mathcal{NP}$-hardness is proved under a very
strict condition that $C_V<<C_E$. We show in \propref{hard-2D} that computing
the $\ggd$ is $\mathcal{NP}$-hard, even if the graphs are planar and arbitrary
$C_V,C_E$ are allowed. 

The paper is organized in the following way. In \secref{ged} and \secref{ggd},
we formally define the two distances $\ggd$ and $\ged$, respectively, and
explore some of their important properties. We then compare the two distances in
\secref{compare}. Finally, \secref{complexity} is devoted to our findings on the
computational complexity of the $\ggd$.

\section{Two Distances for Geometric Graphs}\label{sec:distances} A geometric
graph is a combinatorial graph that is also embedded in a Euclidean space. We
begin with the formal definition.
\begin{definition}[Geometric Graph]\label{def:graph} A (finite) combinatorial
graph $G=(V^G,E^G)$ is called a \emph{geometric graph} of $\R^d$ if the vertex
set $V^G\subset\R^d$ and the Euclidean straight-line segments
$\left\{\overline{ab}\mid (a,b)\in E^G\right\}$ intersect (possibly) at their
endpoints.
\end{definition}
We denote the set of all geometric graphs of $\R^d$ by $\G(\R^d)$, and the
subset of geometric graphs without any isolated vertex by $\G_0(\R^d)$. Two
geometric graphs $G=(V^G,E^G)$ and $H=(V^H,E^H)$ are said to be \emph{equal},
written $G=H$, if and only if $V^G=V^H$ and $E^G=E^H$. We make no distinction
between a geometric graph $G=(V^G,E^G)$ and its \emph{geometric realization} as
a subset of $\R^d$; an edge $(u,v)\in E^G$ can be identified as the line-segment
$\overline{uv}$ in $\R^d$, and its length by the Euclidean length $\len{uv}$.
We denote by $\vol(G)$ the sum of the edge lengths of $G$.

\subsection{Geometric Edit Distance (GED)}
\label{sec:ged} Given two geometric graphs $G,H\in\G(\R^d)$, we transform $G$
into $H$  by applying a sequence of edit operations. The allowed edit operations
and their costs are i) inserting (and deleting) a vertex costs nothing, ii)
inserting (and deleting) an edge costs $C_E$ times its length, and iii)
translating a vertex costs $C_V$ times the displacement of the vertex plus $C_E$
times the total change in the length of \emph{all} its incident edges. The
operations and their costs are summarized in \tabref{1}. Throughout the paper,
we assume that the cost coefficients $C_V$ and $C_E$ are positive constants.  In
order to denote a deleted vertex and a deleted edge, we introduce the
\emph{dummy vertex} $\epsilon_V$ and the \emph{dummy edge} $\epsilon_E$,
respectively. While computing edit costs, we follow the convention that
$|\epsilon_E|=0$, $|a-\epsilon_V|=0$ for any $a\in\R^d$, and
$(u,v)=\epsilon_E$ if either $u=\epsilon_V$ or $v=\epsilon_V$. For each
operation $o$ listed in \tabref{1}, note that its inverse, denoted $o^{-1}$, is
also an edit operation with the same cost. 

\begin{table}[h!]
\centering
\begin{tabular}{l | l} 
\hline
Operation & Cost \\ 
\hline
delete (isolated) vertex $u$ & 0 \\
insert vertex $u\in\R^d$ & 0 \\
add edge $e$ between existing vertices & $C_E|e|$ \\
delete edge $e$ & $C_E|e|$ \\
translate a vertex at $u\in\R^d$ to vertex at $v\in\R^d$ & $C_V|u-v| +
\sum\limits_{(s,u)\in E}C_E\big|\len{su}-\len{sv}\big|$\\ 
[0.5ex]
\hline
\end{tabular}
\caption{Allowed edit operations on a geometric graph and associated costs}
\label{table:1}
\end{table}\begin{definition}[Edit Path]\label{def:path} Given two geometric
graphs $G,H\in\G(\R^d)$, an edit path $P$ from $G$ to $H$ is a (finite) sequence
of edit operations $\{o_i\}_{i=1}^k$ that satisfies the following:
\begin{enumerate}[(a)]
    \item $(o_k\circ\ldots\circ o_2\circ o_1)(G)=H$, i.e., $P(G)=H$, and 
    \item $o_{i+1}$ is a legal edit operation on $(o_i\circ\ldots\circ o_2\circ
    o_1)(G)$ for any $1\leq i\leq k-1$.
\end{enumerate}
\end{definition}
Note that we do not require for an intermediate edit operation to yield a
geometric graph. The set of all edit paths between $G,H\in\G(\R^d)$ is denoted
by $\mathcal P(G,H)$. For an edit path $P=\{o_i\}_{i=1}^k$, the edit path
$\{o^{-1}_i\}_{i=0}^k$ from $H$ to $G$ is called its \emph{inverse path}, and is
denoted by $P^{-1}$. For any vertex $u\in V^G$ (resp.~edge $e\in E^G$), we
denote by $P(u)$ (resp.~$P(e)$) the end result after its evolution under $P$. If
$P$ deletes the vertex $u$ (resp.~edge $e$), we write $P(v)=\epsilon_V$
(resp.~$P(e)=\epsilon_E$). The cost, $\cost(P)$, of an edit path $P$ is defined
to be the total cost of the individual edits. 
\begin{definition}[Cost of Edit Paths]\label{def:edit-cost} The cost of an edit
path $P\in\mathcal P(G,H)$, denoted $\cost(P)$, is the sum of the cost of the
individual edits, i.e.,
\[\cost(P)\eqdef\sum_{o_i\in P}\cost(o_i).\]
\end{definition}  
It is not difficult to note that $\cost(P)=\cost(P^{-1})$. Then, $\ged(G,H)$ is
defined as cost of the \emph{least} expensive edit path.
\begin{definition}[Geometric Edit Distance]\label{def:ged} For geometric graphs
$G,H\in\G(\R^d)$, their geometric edit distance, denoted $\ged(G,H)$, is defined
to be the infimum cost of the edit paths, i.e., 
\[
    \ged(G,H)\eqdef\inf_{P\in\mathcal P(G,H)}\cost(P).
\]
\end{definition}
In \propref{ged-metric}, we prove that $\ged$ is, in fact, a metric on the space
of geometric graphs without any isolated vertex. As also observed in
\cite{cgkss-msgg-09}, the following example demonstrates that the distance may
not be attained by an edit path, unless an infinite number of edits are allowed:
Consider $G,H\in\G(\R^2)$, where $G$ has only one edge $(u_1,u_2)$ and $H$ has
only one edge $(v_1,v_2)$ as shown
\begin{figure}[tbh]
    \centering
    \begin{subfigure}{0.35\textwidth}
    \centering       
    \begin{tikzpicture}[scale=0.8]
        \draw [very thin, gray] (0,0) node [below] {\footnotesize$(0,0)$} -- (0,3)  node [left] {\footnotesize$(0,1)$}
    -- (3,3) node [right] {\footnotesize$(1,1)$} -- (3,0) node [below]
    {\footnotesize$(1,0)$} -- cycle;
    \filldraw (0,0) circle (2pt) node[anchor=south west] {$u_1$} -- 
    node[auto,swap] {$G$} (3,0) circle
    (2pt) node[anchor=south east] {$u_2$};    
    \filldraw (0,3) circle (2pt) node[anchor=south] {$v_1$} -- 
    node[auto] {$H$} (3,3) circle (2pt) node[anchor=south] {$v_2$};    
    \filldraw[dashed] (0,1) circle (2pt) -- node[auto,sloped] {\tiny$(o_1)(G)$} (3,0);    
    \filldraw[dashed] (0,1) -- node[auto] {\tiny$(o_2\circ o_1)(G)$} (3,1) circle (2pt);    
    \draw[|-|] (-.5,0) -- node[auto] {$1/k$} (-.5,1);
    \node at (1.5,2) {$\vdots$};
    \node at (1.5,2.5) {$\vdots$};
    \end{tikzpicture}
    \end{subfigure}
    \begin{subfigure}{0.35\textwidth}
    \centering
    \end{subfigure}
    \begin{subfigure}{0.35\textwidth}
        \centering       
        \begin{tikzpicture}[scale=0.8]
        \draw [very thin, gray] (0,0) node [below] {\footnotesize$(0,0)$} -- (0,3)  node [left] {\footnotesize$(0,1)$}
        -- (3,3) node [right] {\footnotesize$(1,1)$} -- (3,0) node [below]
        {\footnotesize$(1,0)$} -- cycle;
        \filldraw (0,0) circle (2pt) node[anchor=south west] {$u_1$} -- 
        node[auto,swap] {$G$} (3,0) circle
        (2pt) node[anchor=south east] {$u_2$};    
        \filldraw (0,3) circle (2pt) node[anchor=south] {$v_1$} -- 
        node[auto] {$H$} (3,3) circle (2pt)
        node[anchor=south] {$v_2$};
        \path[->,dashed,bend left,shorten >=0.4cm,shorten <=0.5cm,thick] (0,0) edge node[auto]
        {$\pi(u_1)$} (0,3);     
        \path[->,dashed,bend right,shorten >=0.25cm,shorten <=0.5cm,thick] (3,0) edge node[auto,swap]
        {$\pi(u_2)$} (3,3);     
        \end{tikzpicture}
        \end{subfigure}
        \begin{subfigure}{0.4\textwidth}
        \centering
        \end{subfigure}  
\caption{Left: the edit path $P_k$ alternatively moves the left and right
vertices of $G$ by distance $1/k$. Consequently, $\ged(G,H)=2C_V$. Right: The
inexact matching $\pi$ between $G$ and $H$ has been shown to attain the same
distance for $\ggd(G,H)$.}
    \label{fig:wiggle}
\end{figure}

in \figref{wiggle}. For any fixed $k\geq1$, consider the edit path
$P_k=\{o_i\}_{i=1}^{2k}$, where $o_i$ translates the left vertex of $G$ up by a
distance $1/k$ and then $o_{i+1}$ moves the right vertex by the same distance
for any odd $i$. So, for any $i$
\[
\cost(o_i)=C_V\frac{1}{k}+C_E\left[\sqrt{(1/k)^2+1^2}-1\right]
=C_V\frac{1}{k}+C_E\frac{\frac{1}{k^2}}{\sqrt{\frac{1}{k^2}+1}+1}\mbox{, and therefore}
\]  
\[
\ged(G,H)\leq\cost(P_k)
=\sum_{i=1}^{2k}\cost(o_i)
=2C_V+C_E\frac{\frac{2}{k}}{\sqrt{\frac{1}{k^2}+1}+1}
\xrightarrow{\quad k\to\infty\quad}2C_V.
\]
Now, if we assume that $C_E>C_V$, then any edit path with an edge deletion
costs more than $2C_V$ from \eqnref{split-1}. Therefore, $\ged(G,H)=2C_V$.
However, there is no edit path that attains this cost.

In \defref{edit-cost}, the cost of an edit path $P$ is defined as the aggregated
cost from the individual edits involved in $P$. Another perspective of the cost
of $P$ is the total amount paid by $P$ for the evolution of each vertex and edge
of $G$ and $H$. We make this notion more precise by tracking the evolution of
vertices and edges through their orbit.  
\begin{definition}[Orbit of a Vertex]
Let $P\in\mathcal{P}(G,H)$ be an edit path and $u$ a vertex of $G$. The orbit of
$u$ under $P=\{o_i\}_{i=1}^k$ is the sequence of vertices $\{u_i\}_{i=0}^k$,
where $u_0=u$ and $u_i=(o_i\circ o_{i-1}\circ\ldots\circ o_1)(u)$ for $i\geq1$.
And, the cost of the orbit, denoted $\cost_P(u)$, is defined by
\[ 
    \cost_P(u)\eqdef C_V\sum_{i=1}^k\big|u_i-u_{i-1}\big|.
\]
\end{definition}
The $i$th summand above is positive only if $o_i$ is a translation of the
vertex. Using the triangle inequality, we can immediately note the following
fact.
\begin{figure}[tbh]
    \centering
    \begin{subfigure}{0.3\textwidth}
    \centering       
    \begin{tikzpicture}
        \foreach \i in {0,...,3} {
            \draw [very thin, gray] (\i,0) -- (\i,3)  node [below] at (\i,0) {\footnotesize$\i$};
        }
        \foreach \i in {0,...,3} {
            \draw [very thin, gray] (0,\i) -- (3,\i) node [left] at (0,\i) {\footnotesize$\i$};
        }
    \filldraw (0,1) circle (2pt) node[anchor=south west] {$u_1$} -- (0,0) circle (2pt) node[anchor=south west] {$u_2$} -- (2,0) circle (2pt) node[anchor=south west] {$u_3$};    
    \end{tikzpicture}
    \end{subfigure}
    \begin{subfigure}{0.3\textwidth}
    \centering
    \begin{tikzpicture}
        \foreach \i in {0,...,3} {
            \draw [very thin, gray] (\i,0) -- (\i,3)  node [below] at (\i,0) {\footnotesize$\i$};
            }
            \foreach \i in {0,...,3} {
                \draw [very thin, gray] (0,\i) -- (3,\i) node [left] at (0,\i) {\footnotesize$\i$};
                }
                \filldraw[red] (0,0) circle (2pt) node[anchor=south west] {$u_2$} -- (2,0) circle
                (2pt) node[anchor=south west] {$u_3$};
                
                \filldraw[blue] (1,2) circle (2pt) node[anchor=south west] {$v_3$} -- (2,0) circle (2pt);
                
                \filldraw[green] (1,2) circle (2pt) -- (3,2) circle (2pt)
                node[anchor=south west] {$v_2$};
    \end{tikzpicture}
    \end{subfigure}
    \begin{subfigure}{0.3\textwidth}
    \centering
    \begin{tikzpicture}
        \foreach \i in {0,...,3} {
            \draw [very thin, gray] (\i,0) -- (\i,3)  node [below] at (\i,0) {\footnotesize$\i$};
        }
        \foreach \i in {0,...,3} {
            \draw [very thin, gray] (0,\i) -- (3,\i) node [left] at (0,\i) {\footnotesize$\i$};
        }
    \filldraw (1,2) circle (2pt) node[anchor=south west] {$v_3$} -- (3,2) circle (2pt) node[anchor=south west] {$v_2$} -- (3,3) circle (2pt) node[anchor=south west] {$v_1$};
    \end{tikzpicture}
    \end{subfigure}
\caption{Two graphs $G,H\in\G(\R^2)$ have been shown on the left and right,
respectively. In the middle, the evolution of $G$ under an edit path
$P=\{o_1,o_2,o_3,o_4\}$ is demonstrated. The edit $o_1$ deletes the edge
$(u_1,u_2)$, then $o_2$ translates $u_2$ to $v_3$, after that $o_4$ translates
$u_3$ to $v_2$, and finally $o_4$ inserts the edge $(v_1,v_2)$. The orbit of the
vertex $u_2$ is $\{u_2,u_2,v_3,v_3,v_3\}$, whereas the orbit of $(u_2,u_3)$ is
$\{(u_2,u_3),(u_2,u_3),(v_3,u_3),(v_3,v_2),(v_3,v_2)\}$.}
\label{fig:edit-path}
\end{figure}

\begin{lemma}[Cost of Vertex Orbit]\label{lem:vertex} For a vertex $u\in V^G$
and $P\in\mathcal P(G,H)$, we have
\[ 
    \cost_P(u)\geq C_V|u-P(u)|.
\]
\end{lemma}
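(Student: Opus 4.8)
The plan is to read off $\cost_P(u)$ from its definition and collapse the orbit of $u$ by the triangle inequality. Write the orbit of $u$ under $P=\{o_i\}_{i=1}^k$ as $\{u_i\}_{i=0}^k$, so that $u_0=u$, $u_k=P(u)$, and $\cost_P(u)=C_V\sum_{i=1}^k|u_i-u_{i-1}|$. The entire content of the lemma is then the single scalar inequality $\sum_{i=1}^k|u_i-u_{i-1}|\ge |u_k-u_0|$, multiplied by $C_V$.

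First I would handle the one case where the telescoping triangle inequality cannot be applied verbatim, namely when some $u_j$ equals the dummy vertex $\epsilon_V$. I claim that once the orbit reaches $\epsilon_V$ it stays there: among the operations in \tabref{1}, the only one that sends a vertex of $\R^d$ to $\epsilon_V$ is the deletion of that very vertex, and no operation turns a deleted vertex back into a genuine point of $\R^d$ (an insertion creates a fresh vertex, not one identified with $u$). Hence in this case $P(u)=u_k=\epsilon_V$, the right-hand side $C_V|u-P(u)|$ equals $0$ by the convention $|a-\epsilon_V|=0$, and the inequality is immediate since every term $|u_i-u_{i-1}|$ is nonnegative.

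In the complementary case $P(u)\in\R^d$, the claim above shows that no $u_i$ is the dummy vertex, so all the $u_i$ lie in $\R^d$ and the ordinary triangle inequality applies. Applying it $k-1$ times (or arguing by induction on $k$) gives
\[
\sum_{i=1}^{k}\bigl|u_i-u_{i-1}\bigr|\;\ge\;\bigl|u_k-u_0\bigr|\;=\;\bigl|P(u)-u\bigr|,
\]
and multiplying through by the positive constant $C_V$ yields $\cost_P(u)\ge C_V|u-P(u)|$.

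I do not expect any genuine obstacle here: the estimate is a one-line telescoping bound, and the only place that calls for a little care is the bookkeeping around $\epsilon_V$ — making sure the triangle inequality is invoked only on honest points of $\R^d$, and that the degenerate ``deleted forever'' case is absorbed by the stated convention.
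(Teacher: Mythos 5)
Your proof is correct and follows the same route as the paper, which simply notes that the lemma is an immediate consequence of the triangle inequality applied to the orbit $\{u_i\}_{i=0}^k$. The only addition is your explicit bookkeeping for the case where the orbit reaches $\epsilon_V$ (handled by the convention $|a-\epsilon_V|=0$), which the paper leaves implicit.
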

We similarly define the orbit of an edge and its cost.
\begin{definition}[Orbit of an Edge]
Let $P\in\mathcal{P}(G,H)$ be an edit path and $e$ an edge of $G$. The orbit of
$e$ under $P=\{o_i\}_{i=1}^k$ is the sequence of edges $\{e_i\}_{i=0}^k$, where
$e_0=e$ and $e_i=(o_i\circ o_{i-1}\circ\ldots\circ o_1)(e)$ for $i\geq1$. And,
the cost of the orbit, denoted $\cost_P(e)$, is defined by
\[ 
\cost_P(e)\eqdef C_E\sum_{i=1}^k\big||e_i|-|e_{i-1}|\big|.
\]
\end{definition}
We note that deletion of the edge or translation of an incident vertex are the
only edit operations in $P$ that can potentially contribute to a positive
summand in the cost function above. Again, the triangle inequality implies the
following lemma.
\begin{lemma}[Cost of Edge Orbit]\label{lem:edge} For an edge $e\in E^G$ and
$P\in\mathcal P(G,H)$, we have
\[ 
    \cost_P(e)\geq C_E\big||e|-|P(e)|\big|.
\]
In particular, $\cost_P(e)\geq|e|$ if $P$ eventually deletes $e$, i.e.,
$P(e)=\epsilon_E$.
\end{lemma}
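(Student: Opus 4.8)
The plan is to obtain the inequality directly from the triangle inequality, by the same argument that underlies Lemma~\ref{lem:vertex}. Let $\{e_i\}_{i=0}^k$ be the orbit of $e$ under $P=\{o_i\}_{i=1}^k$, so that $e_0=e$ and $e_k=P(e)$ by the definition of the orbit and of $P(e)$. Each $|e_i|$ is a well-defined nonnegative real: it is the Euclidean length of the line segment $e_i$ when $e_i$ is a genuine edge, and it equals $0$ by convention when $e_i=\epsilon_E$ (the edge has already been deleted by step $i$). Hence $\bigl(|e_0|,|e_1|,\dots,|e_k|\bigr)$ is a finite sequence of reals, and the quantity $\sum_{i=1}^k\bigl||e_i|-|e_{i-1}|\bigr|$ appearing in the definition of $\cost_P(e)$ is precisely its total variation.

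Since the total variation of a real sequence is at least the absolute value of its net increment, the telescoping estimate
\[
\sum_{i=1}^k\bigl||e_i|-|e_{i-1}|\bigr|
\;\ge\;
\Bigl|\sum_{i=1}^k\bigl(|e_i|-|e_{i-1}|\bigr)\Bigr|
\;=\;
\bigl||e_k|-|e_0|\bigr|
\;=\;
\bigl||P(e)|-|e|\bigr|
\]
holds. Multiplying through by the positive constant $C_E$ yields $\cost_P(e)\ge C_E\bigl||e|-|P(e)|\bigr|$, the asserted bound. For the final claim, suppose $P$ eventually deletes $e$, i.e.\ $P(e)=\epsilon_E$; then $|P(e)|=0$ by convention, and the bound specializes to $\cost_P(e)\ge C_E|e|$, which is the stated inequality with the cost coefficient made explicit.

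I do not anticipate a genuine obstacle here; the only point requiring care is the bookkeeping around the dummy edge $\epsilon_E$ — namely, that once $e$ is deleted the remaining entries of its orbit are all $\epsilon_E$ with length $0$, so that the net increment from $|e_0|=|e|$ to $|e_k|=0$ is exactly $-|e|$. In contrast to some later arguments, no case analysis over which edit operations actually change $|e_i|$ (edge deletion, or translation of an incident vertex) is needed, because the total variation dominates the net change regardless of the intermediate behavior; identifying, as in the remark preceding the lemma, the operations that can contribute a positive summand is relevant only for upper bounds, not for this lower bound.
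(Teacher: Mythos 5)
Your proof is correct and follows the same route the paper takes: the paper dispatches this lemma with exactly the triangle-inequality (telescoping) argument on the sequence of edge lengths, using the convention $|\epsilon_E|=0$, and your handling of the deletion case (including making the $C_E$ factor explicit) matches it.
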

For examples of vertex and edge orbits see \figref{edit-path}. In order to
describe $\cost(P)$ in terms of the costs of individual orbits, we note that
$\cost(P)$ accounts for the costs of the orbits of:
\begin{enumerate}[(a)]
\item vertices $u\in V^G$ that end up as a vertex of $H$, i.e.,
$P(u)\neq\epsilon_V$
\item vertices $u\in V^G$ with $P(u)=\epsilon_V$
\item vertices $v\in V^H$ that have been inserted, i.e., $P^{-1}(v)=\epsilon_V$
\item edges $e\in E^G$ that end up as an edge of $H$, i.e., $P(e)\neq\epsilon_E$
\item edges $e\in E^G$ with $P(e)=\epsilon_E$
\item edges $f\in E^H$ that have been inserted, i.e.,
$P^{-1}(f)=\epsilon_E$
\item vertices and edges that have been inserted at some point and have also
been deleted eventually.
\end{enumerate}
Moreover, we observe that two vertex (resp. edge) orbits $\{x_i\}$ and $\{y_i\}$
intersect at the $i_0$th position only if $x_i=y_i=\epsilon_V$ (resp.
$x_i=y_i=\epsilon_E$) for all $i\geq i_0$. As a consequence, the positive
summands in the costs of two orbits are necessarily distinct. Accumulating the
costs for all orbits of type (a)--(f), we can, therefore, write 
\begin{equation}\label{eqn:split}
\begin{split}
\cost(P)\geq 
& \underbrace{\sum_{\substack{u\in V^G \\ P(u)\neq\epsilon_V}} \cost_P(u)}
    _\text{vertex translations} + 
\underbrace{\sum_{\substack{u\in V^G \\ P(u)=\epsilon_V}} \cost_P(u)}
    _\text{vertex deletions} + 
\underbrace{\sum_{\substack{v\in V^H \\ P^{-1}(v)=\epsilon_V}} \cost_{P^{-1}}(v)}
    _\text{vertex insertions} \\
& + \underbrace{\sum_{\substack{e\in E^G \\ P(e)\neq\epsilon_E}} \cost_P(e)}
    _\text{edge translations} + 
\underbrace{\sum_{\substack{e\in E^G \\ P(e)=\epsilon_E}} \cost_P(e)}
    _\text{edge deletions} + 
\underbrace{\sum_{\substack{f\in E^H \\ P^{-1}(f)=\epsilon_E}} \cost_{P^{-1}}(f)}
    _\text{edge insertions}.
\end{split}
\end{equation} 
Equation \eqnref{split} together with \lemref{vertex} and \lemref{edge} readily
imply the following useful result.
\begin{lemma}\label{lem:part-1} For any edit path $P\in\mathcal P(G,H)$, it
holds that
\begin{align}\label{eqn:split-1}
\cost(P)\geq \sum_{\substack{u\in V^G \\ P(u)\neq\epsilon_V}}C_V|u-P(u)|
+ \sum_{\substack{e\in E^G \\ P(e)\neq\epsilon_E}}C_E\big||e|-|P(e)|\big| + 
\sum_{\substack{e\in E^G \\ P(e)=\epsilon_E}}C_E|e| 
+ \sum_{\substack{f\in E^H \\ P^{-1}(f)=\epsilon_E}}C_E|f|.
\end{align} 
\end{lemma}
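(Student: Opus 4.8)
The plan is to derive \eqnref{split-1} directly from Equation~\eqnref{split} by bounding each of its six orbit-cost sums individually and discarding the two that do not appear in the statement. The decomposition in \eqnref{split} already does the structural work: it writes $\cost(P)$ as at least the sum of the orbit costs of the disjoint families (a)--(f) of vertices and edges, and the observation made just before \eqnref{split}---that positive summands coming from distinct orbits are never shared---is what makes this additive bookkeeping valid. So within this proof I only need to estimate each of the six sums.

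First I would treat the vertex contributions. Applying \lemref{vertex} to each $u\in V^G$ with $P(u)\neq\epsilon_V$ turns the ``vertex translations'' sum into $\sum C_V|u-P(u)|$, the first term on the right-hand side of \eqnref{split-1}. The ``vertex deletions'' and ``vertex insertions'' sums are sums of the quantities $\cost_P(u)$ and $\cost_{P^{-1}}(v)$, each a non-negative multiple of $C_V$; dropping them only weakens the inequality, which is permitted. These are precisely the two terms absent from the statement.

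Next I would treat the edge contributions. For the ``edge translations'' sum, \lemref{edge} gives $\cost_P(e)\ge C_E\big||e|-|P(e)|\big|$ for each $e\in E^G$ with $P(e)\neq\epsilon_E$, producing the second term. For the ``edge deletions'' sum I would invoke the ``in particular'' clause of \lemref{edge}: when $P(e)=\epsilon_E$ we have $\cost_P(e)\ge C_E|e|$, which (using $|\epsilon_E|=0$) is just the general bound again, and this yields the third term. For the ``edge insertions'' sum I would apply \lemref{edge} to the inverse edit path $P^{-1}\in\mathcal P(H,G)$: for each $f\in E^H$ with $P^{-1}(f)=\epsilon_E$ we get $\cost_{P^{-1}}(f)\ge C_E|f|$, the fourth term. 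Adding the four resulting lower bounds and discarding the two non-negative vertex sums gives exactly \eqnref{split-1}.

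I do not expect a real obstacle: the lemma is a bookkeeping consequence of \eqnref{split} together with \lemref{vertex} and \lemref{edge}. The only points that need a moment's care are (i) checking that the index sets appearing in \eqnref{split} coincide verbatim with those in \eqnref{split-1}, so that the termwise application of the orbit lemmas is legitimate and nothing is double-counted or omitted, and (ii) noting that \lemref{edge} may be applied to $P^{-1}$ because $P^{-1}$ is itself an edit path (from $H$ to $G$) and its orbits of edges of $H$ are exactly the objects over which the sixth sum of \eqnref{split} ranges.
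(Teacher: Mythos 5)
Your proposal is correct and follows exactly the route the paper intends: the paper states that \eqnref{split} together with \lemref{vertex} and \lemref{edge} ``readily imply'' the lemma, and your termwise application of those lemmas (including applying \lemref{edge} to the inverse path $P^{-1}$ for the inserted edges of $H$) plus discarding the non-negative vertex deletion/insertion sums is precisely that argument spelled out.
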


\begin{proposition}[$\ged$ is a Metric]\label{prop:ged-metric} The $\ged$
defines a metric on $\G_0(\R^d)$, the space of geometric graphs without any
isolated vertex. 
\end{proposition}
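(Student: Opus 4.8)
The plan is to verify the four metric axioms on $\G_0(\R^d)$: non-negativity, identity of indiscernibles, symmetry, and the triangle inequality. Non-negativity is immediate since every edit operation has non-negative cost, so $\cost(P)\ge 0$ for every edit path and hence the infimum is non-negative. Symmetry follows from the observation already recorded in the excerpt: for any $P\in\mathcal P(G,H)$ the inverse path $P^{-1}\in\mathcal P(H,G)$ satisfies $\cost(P)=\cost(P^{-1})$, so the two infima defining $\ged(G,H)$ and $\ged(H,G)$ range over cost-equal sets and therefore agree. The triangle inequality is the easy structural part: given $P\in\mathcal P(G,H)$ and $Q\in\mathcal P(H,K)$, the concatenation $Q\ast P$ (apply $P$, then $Q$) is a legal edit path from $G$ to $K$ with $\cost(Q\ast P)=\cost(P)+\cost(Q)$; taking infima over $P$ and $Q$ separately gives $\ged(G,K)\le\ged(G,H)+\ged(H,K)$. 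Here I would note explicitly that legality is preserved because \defref{path} does not require intermediate stages to be genuine geometric graphs, so there is no obstruction to gluing the two sequences.

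The substantive step is the identity of indiscernibles: $\ged(G,H)=0$ if and only if $G=H$. One direction is trivial — the empty edit path witnesses $\ged(G,G)=0$. For the converse, suppose $\ged(G,H)=0$; I must show $V^G=V^H$ and $E^G=E^H$. The key tool is \lemref{part-1}: for every edit path $P$,
\[
\cost(P)\ \ge\ \sum_{\substack{u\in V^G\\ P(u)\neq\epsilon_V}}C_V|u-P(u)|
\;+\;\sum_{\substack{e\in E^G\\ P(e)=\epsilon_E}}C_E|e|
\;+\;\sum_{\substack{f\in E^H\\ P^{-1}(f)=\epsilon_E}}C_E|f|,
\]
with all terms non-negative. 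If $\cost(P)$ can be made arbitrarily small, then each of these partial sums can be made arbitrarily small. I would argue that because $G$ and $H$ have no isolated vertices, every vertex of $V^G$ (resp.\ $V^H$) is an endpoint of some edge; combined with the edge-length terms this forces, in the limit, every edge of $G$ to survive ($P(e)\ne\epsilon_E$) and every edge of $H$ to be matched by an old edge ($P^{-1}(f)\ne\epsilon_E$), and every surviving vertex to move a vanishing distance. Since the vertex and edge sets are finite, one can extract from a sequence $P_k$ with $\cost(P_k)\to 0$ a single combinatorial matching pattern (which vertex of $G$ goes to which vertex of $H$, which edge to which edge) that recurs infinitely often; along that subsequence the displacement sums tend to $0$, forcing the matched vertices to have identical coordinates and hence $V^G=V^H$, and then $E^G=E^H$ follows from the edge-survival/edge-match conclusions.

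The main obstacle I anticipate is precisely this last pigeonhole-plus-limit argument: one has to rule out the possibility that $\cost(P_k)\to 0$ is achieved by deleting some vertices/edges of $G$ and inserting different ones to build $H$ "for free" in the limit. The no-isolated-vertex hypothesis is doing real work here, and I expect the cleanest route is to lower-bound $\cost(P)$ by a fixed positive constant whenever the matching pattern is not the "correct" one — e.g.\ if some edge $e\in E^G$ is deleted, then (using that $G\in\G_0$ and the displacement bounds) either $|e|$ appears as a positive term, or the incident vertices must travel a definite distance to rejoin $H$, so $\cost(P)\ge c(G,H)>0$; taking the minimum of such constants over the finitely many "incorrect" patterns contradicts $\ged(G,H)=0$. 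Once that dichotomy is set up, the remaining computation — that a zero-cost limit along the correct pattern forces coordinatewise equality — is routine. I would also remark that restricting to $\G_0(\R^d)$ is genuinely necessary, since isolated vertices can be inserted and deleted at zero cost, so two graphs differing only by isolated vertices would be at $\ged$-distance $0$ without being equal.
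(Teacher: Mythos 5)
Your plan is correct and follows essentially the same route as the paper: non-negativity, symmetry via inverse paths, the triangle inequality via concatenation of edit paths, and separability via the lower bound of \lemref{part-1} combined with the minimum edge length and the no-isolated-vertex hypothesis to rule out edge/vertex deletions and insertions and to force vanishing vertex displacements. The paper implements the separability step with a single explicitly chosen threshold $\xi$ (built from the smallest edge lengths and the Hausdorff distance between the vertex sets) applied to one sufficiently cheap edit path, rather than your sequence-plus-pigeonhole extraction, but this is the same underlying idea---indeed it is the ``fixed positive lower bound for incorrect matching patterns'' route you yourself identify as the cleanest.
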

\begin{proof}
\textbf{Non-negativity.} Since the cost of edit paths are non-negative,
\defref{ged} implies that $\ged(G,H)$ is non-negative for any
$G,H\in\G_0(\R^d)$. \\
    
\noindent\textbf{Separability.} If $\ged(G,H)=0$, we claim that $G=H$, i.e.,
$V^G=V^H$ and $E^G=E^H$. In order to show that $V^G=V^H$, it suffices to show
that the Hausdorff distance $r:=d_H(V^G,V^H)$ between the vertex sets is zero.
Fix 
\[
2\xi=\begin{cases}
C_E\min\{l^G,l^H\},
&\text{ if }r=0 \\
\min\{C_V r, C_E l^G,C_E l^H\},
&\text{ if }r\neq 0
\end{cases}
\]
where $l^G$ and $l^H$ denote the smallest edge lengths of $G$ and $H$,
respectively. Since $\xi>0$, the definition of $\ged$ implies that there is an
edit path $P\in\mathcal P(G,H)$ with $\cost(P)\leq\xi$. Consequently, each of
the four summands in \eqnref{split-1} is no larger than $\xi$. We immediately
see that there is no edge $e\in E^G$ such that $P(e)=\epsilon_E$. Otherwise, the
third summand in \eqnref{split-1} would be at least 
\[ C_E|e|\geq C_E l^G\geq2\xi>\xi, \] leading to a contradiction. The last
inequality above is due to the observation that $\xi>0$. Similarly using the
fourth summand in \eqnref{split-1}, we conclude there is no edge $f\in E^H$ such
that $P^{-1}(f)=\epsilon_E$. In other words, $P$ does not delete any edge of $G$
or $H$, i.e., $|E^G|=|E^H|$. As a result, we can further say that no vertex of
$G$ can be removed and no vertex of $H$ can been inserted, since the input
graphs do not have any isolated vertices. Since $H=P(G)$, the graphs $G$ and $H$
must be isomorphic. Lastly, we show that $V^G=V^H$, i.e., $r=0$. If not, i.e.,
$r\neq0$ and $u_0\in V^G$ such that all the vertices of $H$ are at least $r$
distance away from it, then
\[ C_V|u_0-P(u_0)|\geq C_Vr\geq2\xi>\xi. \] This is a contradiction, because the
first term in \eqnref{split-1} exceeds $\xi$. So, $r=0$. Therefore, $G=H$.\\
    
\noindent\textbf{Symmetry.} Each elementary edit operation can be reversed at
exactly the same cost. Given an edit path $P\in\mathcal{P}(G,H)$, we can reverse
the operations to get an edit path $P^{-1}\in\mathcal{P}(H,G)$ with
$\cost(P)=\cost(P^{-1})$. By \defref{ged}, for an arbitrary $\xi>0$ there exists
$P\in\mathcal{P}(G,H)$ such that $\cost(P)\leq\ged(G,H)+\xi$. On the other hand, 
\[ \ged(H,G)\leq\cost(P^{-1})=\cost(P)\leq\ged(G,H)+\xi. \] Since $\xi$ is
arbitrary, this implies $\ged(H,G)\leq\ged(G,H)$. By a similar argument, one can
also show $\ged(H,G)\geq\ged(G,H)$. Together, they imply
$\ged(H,G)=\ged(G,H)$. \\
    
\noindent\textbf{Triangle Inequality.} Fix an arbitrary $\xi>0$ and
$G,H,I\in\G_0(\R^d)$. By \defref{ged}, there must exist edit paths
$P_1\in\mathcal{P}(G,H)$ and $P_2\in\mathcal{P}(H,I)$ such that
$\cost(P_1)\leq\ged(G,H)+\xi/2$ and $\cost(P_2)\leq\ged(H,I)+\xi/2$. If we
define $P$ to be the concatenation of the edit operations from $P_1$ and $P_2$
in the same order, then $P\in\mathcal P(G,I)$. Moreover,
$\cost(P)=\cost(P_1)+\cost(P_2)$. Now,
\begin{align*}
\ged(G,I) &
\leq\cost(P),\text{ from the Definition of }\ged  \\ & 
=\cost(P_1)+\cost(P_2) \\ &
\leq\left[\ged(G,H)+\frac{\xi}{2}\right]+
\left[\ged(H,I)+\frac{\xi}{2}\right] \\ &
=\ged(G,H)+\ged(H,I)+\xi.
\end{align*}
Since the choice of $\xi$ is arbitrary, we get
$\ged(G,I)\leq\ged(G,H)+\ged(H,I)$.
\end{proof}

\subsection{Geometric Graph Distance (GGD)}\label{sec:ggd} The definition of
$\ged$ is very intuitive but not at all suited for computational purposes.
Firstly, there could be infinitely many locations a vertex is allowed to be
translated to. Secondly, there are infinitely many edit paths between two
graphs---even if the vertices are located on a finite grid. The infinite search
space makes the computation of $\ged$ elusive. As a feasible alternative we
study the $\ggd$. The definition is inspired by the concept of inexact matching
first proposed in \cite{bunke_inexact_1983} for attributed graphs, and later
introduced for geometric graphs in \cite{cgkss-msgg-09}. We follow the notation
of \cite{bunke_inexact_1983} in order to define it. We first define an
\emph{(inexact) matching}.
\begin{definition}[Inexact Matching]\label{def:pi} Let $G,H\in\G(\R^d)$ be two
geometric graphs. A relation
$\pi\subseteq(V^G\cup\{\epsilon_V\})\times(V^H\cup\{\epsilon_V\})$ is called an
(inexact) matching if for any $u\in V^G$ (resp.~$v\in V^H$) there is exactly one
$v\in V^H\cup\{\epsilon_V\}$ (resp.~$u\in V^G\cup\{\epsilon_V\}$) such that
$(u,v)\in\pi$.
\end{definition}
The set of all matchings between graphs $G,H$ is denoted by $\Pi(G,H)$.
Intuitively speaking, a matching $\pi$ is a relation that covers the vertex sets
$V^G,V^H$ exactly once. As a result, when restricted to $V_E$ (resp.~$V^G$), a
matching $\pi$ can be expressed as a map $\pi:V^G\to V^H\cup\{\epsilon_V\}$
(resp.~$\pi^{-1}:V^H\to V^G\cup\{\epsilon_V\}$). In other words, when
$(u,v)\in\pi$ and $u\neq\epsilon_V$ (resp.~$v\neq\epsilon_V$), it is justified
to write $\pi(u)=v$ (resp.~$\pi^{-1}(v)=u$). It is evident from the definition
that the induced map $$\pi:\{u\in V^G\mid\pi(u)\neq\epsilon_V\}\to \{v\in
V^H\mid\pi^{-1}(v)\neq\epsilon_V\}$$ is a bijection. Additionally for edges
$e=(u_1,u_2)\in E^G$ and $f=(v_1,v_2)\in E^H$, we introduce the short-hand
$\pi(e):=(\pi(u_1),\pi(u_2))$ and $\pi^{-1}(f):=(\pi^{-1}(v_1),\pi^{-1}(v_2))$.

Another perspective of $\pi$ is discerned when viewed as a matching between
portions of $G$ and $H$, (possibly) after applying some edits on the two graphs.
For example, $\pi(u)=\epsilon_V$ (resp.~$\pi^{-1}(v)=\epsilon_V$) encodes
deletion of the vertex $u$ from $G$ (resp.~$v$ from $H$), whereas
$\pi(e)=\epsilon_E$ (resp.~$\pi^{-1}(f)=\epsilon_E$) encodes deletion of the
edge $e$ from $G$ (resp.~$f$ from $H$). Once the above deletion operations have
been performed on the graphs, the resulting subgraphs of $G$ and $H$ become
isomorphic, which are finally matched by translating the remaining vertices $u$
to $\pi(u)$. Now, the cost of the matching $\pi$ is defined as the total cost
for all of these operations:
\begin{definition}[Cost of a Matching] Let $G,H\in\mathcal G(\R^d)$ be geometric
graphs and $\pi\in\Pi(G,H)$ an inexact matching. The cost of $\pi$, denoted
$\cost(\pi)$, is defined as
\begin{equation}\label{eqn:split-ggd}
    \cost(\pi)= 
    \underbrace{\sum_{\substack{u\in V^G \\ \pi(u)\neq\epsilon_V}} 
    C_V\lvert u-\pi(u)|}_\text{vertex translations} + 
    \underbrace{\sum_{\substack{e\in E^G \\ \pi(e)\neq\epsilon_E}} 
    C_E\big||e|-|\pi(e)|\big|}_\text{edge translations} + 
    \underbrace{\sum_{\substack{e\in E^G \\ \pi(e)=\epsilon_E}} C_E|e|}
    _\text{edge deletions} + 
    \underbrace{\sum_{\substack{f\in E^H \\ \pi^{-1}(f)=\epsilon_E}} C_E|f|}
    _\text{edge deletions}.
\end{equation} 
\end{definition}

\begin{definition}[$\ggd$]\label{def:ggd} For geometric graphs $G,H\in\G(\R^d)$,
    their geometric graph distance , denoted $\ggd(G,H)$, is defined as the minimum
    cost of an inexact matching, i.e., 
    \[\ggd(G,H)\eqdef\min_{\pi\in\Pi(G,H)}\cost(\pi).\]
\end{definition}
The minimum cost matching between two graphs along with their $\ggd$ has been
illustrated in \figref{wiggle}. The above definition readily yields the
following result.
\begin{lemma}\label{lem:vol} Let $G,H\in\G(\R^d)$ be geometric graphs. For any
$\pi\in\Pi(G,H)$, we have
\[
\cost(\pi)\geq\sum_{\substack{u\in V^G \\ \pi(u)\neq\epsilon_V}} C_V|u-\pi(u)|
+ C_E|\vol(G)-\vol(H)|+2\min\bigg\{
    \sum_{\substack{e\in E^G \\ \pi(e)=\epsilon_E}} C_E|e|, 
    \sum_{\substack{f\in E^H \\ \pi^{-1}(f)=\epsilon_E}} C_E|f|
\bigg\}.
\]
\end{lemma}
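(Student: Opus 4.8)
The plan is to work directly from the four-term expression for $\cost(\pi)$ in \eqnref{split-ggd}, leave the vertex-translation sum untouched, and bound the three edge-related sums from below using two applications of the triangle inequality together with the elementary identity $x+y-|x-y|=2\min\{x,y\}$. Abbreviate the edge terms as
$A=\sum_{e\in E^G,\,\pi(e)\neq\epsilon_E}C_E\big||e|-|\pi(e)|\big|$,
$B=\sum_{e\in E^G,\,\pi(e)=\epsilon_E}C_E|e|$, and
$C=\sum_{f\in E^H,\,\pi^{-1}(f)=\epsilon_E}C_E|f|$,
so that $\cost(\pi)=\big(\text{vertex-translation sum}\big)+A+B+C$, and it suffices to prove $A+B+C\ \ge\ C_E\lvert\vol(G)-\vol(H)\rvert+2\min\{B,C\}$.

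First I would pull the absolute value out of $A$ by the triangle inequality:
\[
A\ \ge\ C_E\biggl|\ \sum_{\substack{e\in E^G\\ \pi(e)\neq\epsilon_E}}|e|\ -\ \sum_{\substack{e\in E^G\\ \pi(e)\neq\epsilon_E}}|\pi(e)|\ \biggr|.
\]
Here comes the one structural input: recall from the discussion after \defref{pi} that, after the $\epsilon_V$- and $\epsilon_E$-matched vertices and edges are removed, $\pi$ is an isomorphism between the remaining subgraphs of $G$ and $H$. Consequently $e\mapsto\pi(e)$ is a length-preserving bijection from $\{e\in E^G:\pi(e)\neq\epsilon_E\}$ onto $\{f\in E^H:\pi^{-1}(f)\neq\epsilon_E\}$, so $\sum_{\pi(e)\neq\epsilon_E}|\pi(e)|=\sum_{\pi^{-1}(f)\neq\epsilon_E}|f|=\vol(H)-C/C_E$, while $\sum_{\pi(e)\neq\epsilon_E}|e|=\vol(G)-B/C_E$. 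Substituting these identities gives $A\ \ge\ \big|C_E(\vol(G)-\vol(H))-(B-C)\big|$.

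A second use of the triangle inequality, in the form $|x-y|\ge|x|-|y|$, yields $A\ \ge\ C_E\lvert\vol(G)-\vol(H)\rvert-|B-C|$. Adding $B+C$ to both sides and using $B+C-|B-C|=2\min\{B,C\}$ gives exactly $A+B+C\ \ge\ C_E\lvert\vol(G)-\vol(H)\rvert+2\min\{B,C\}$; adding back the untouched vertex-translation sum $\sum_{u\in V^G,\,\pi(u)\neq\epsilon_V}C_V|u-\pi(u)|$ completes the argument.

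The only step that is not pure bookkeeping is the middle one: recognizing that the ``edge-translation'' sum $A$ must be compared to $\vol(G)-\vol(H)$ through the length-preserving edge bijection that $\pi$ induces on the matched subgraphs, which is what lets the $|\pi(e)|$ terms collapse into $\vol(H)$ minus the $H$-side deletions. Everything after that is two triangle inequalities and the $\min$-identity; the only care needed is to keep the two deletion sums (over $E^G$ and over $E^H$) distinct and to track the $C_E$ factors correctly.
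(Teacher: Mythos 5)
Your proof is correct and takes essentially the same route as the paper: both leave the vertex term alone, bound the edge-translation sum by the triangle inequality, and use the re-indexing identity $\sum_{\pi(e)\neq\epsilon_E}|\pi(e)|=\sum_{\pi^{-1}(f)\neq\epsilon_E}|f|$ (so that the matched $H$-lengths plus the $H$-side deletions give $\vol(H)$); the paper merely replaces your reverse triangle inequality and the identity $B+C-|B-C|=2\min\{B,C\}$ with a without-loss-of-generality assumption on which deletion sum is smaller and a second use of $|a|+|b|\ge|a+b|$. One small wording fix: the bijection $e\mapsto\pi(e)$ between matched edges is not \emph{length-preserving} (if it were, the edge-translation sum would vanish); all you need, and all you actually use, is that it is a bijection onto $\{f\in E^H:\pi^{-1}(f)\neq\epsilon_E\}$, so the sum of $|\pi(e)|$ re-indexes to the sum of $|f|$ over the non-deleted edges of $H$.
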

\begin{proof}
Without any loss of generality, we assume that
\begin{equation}\label{eqn:assume}
    \sum_{\substack{e\in E^G \\ \pi(e)=\epsilon_E}} C_E|e| \leq 
    \sum_{\substack{f\in E^H \\ \pi^{-1}(f)=\epsilon_E}} C_E|f|.    
\end{equation}
From \eqnref{split-ggd}, we have
\begin{align*}
&\cost(\pi)=\sum_{\substack{u\in V^G \\ \pi(u)\neq\epsilon_V}} C_V|u-\pi(u)|+ 
\sum_{\substack{e\in E^G \\ \pi(e)\neq\epsilon_E}} C_E\big||e|-|\pi(e)|\big| + 
\sum_{\substack{e\in E^G \\ \pi(e)=\epsilon_E}} C_E|e| + 
\sum_{\substack{f\in E^H \\ \pi^{-1}(f)=\epsilon_E}} C_E|f| \\
&=\sum_{\substack{u\in V^G \\ \pi(u)\neq\epsilon_V}} C_V|u-\pi(u)| + 
\sum_{\substack{e\in E^G \\ \pi(e)\neq\epsilon_E}} C_E\big||\pi(e)-|e||\big| + 
\sum_{\substack{f\in E^H \\ \pi^{-1}(f)=\epsilon_E}} C_E|f| - 
\sum_{\substack{e\in E^G \\ \pi(e)=\epsilon_E}} C_E|e| + 
2\sum_{\substack{e\in E^G \\ \pi(e)=\epsilon_E}} C_E|e| \\
&=\sum_{\substack{u\in V^G \\ \pi(u)\neq\epsilon_V}} C_V|u-\pi(u)| + 
\sum_{\substack{e\in E^G \\ \pi(e)\neq\epsilon_E}} C_E\big||\pi(e)-|e||\big| + 
\bigg|\sum_{\substack{f\in E^H \\ \pi^{-1}(f)=\epsilon_E}} C_E|f| - 
\sum_{\substack{e\in E^G \\ \pi(e)=\epsilon_E}} C_E|e|\bigg| \\
&\quad + 2\sum_{\substack{e\in E^G \\ \pi(e)=\epsilon_E}} C_E|e|,
\text{ from \eqnref{assume}} \\
&\geq \sum_{\substack{u\in V^G \\ \pi(u)\neq\epsilon_V}} C_V|u-\pi(u)| + 
\bigg|\sum_{\substack{e\in E^G \\ \pi(e)\neq\epsilon_E}} C_E(|\pi(e)|-|e|)\bigg|
+ \bigg|\sum_{\substack{f\in E^H \\ \pi^{-1}(f)=\epsilon_E}} C_E|f|
-\sum_{\substack{e\in E^G \\ \pi(e)=\epsilon_E}} C_E|e|\bigg|\\
&\quad + 2\sum_{\substack{e\in E^G \\ \pi(e)=\epsilon_E}} C_E|e|,
\text{ by the triangle inequality} \\
&\geq\sum_{\substack{u\in V^G \\ \pi(u)\neq\epsilon_V}} C_V|u-\pi(u)| 
+\bigg|\sum_{\substack{e\in E^G \\ \pi(e)\neq\epsilon_E}} C_E(|\pi(e)|-|e|) 
+\sum_{\substack{f\in E^H \\ \pi^{-1}(f)=\epsilon_E}} C_E|f|
-\sum_{\substack{e\in E^G \\ \pi(e)=\epsilon_E}} C_E|e|
\bigg| \\ 
&\quad + 2\sum_{\substack{e\in E^G \\ \pi(e)=\epsilon_E}} C_E|e|,
\text{ by the triangle inequality} \\
&=\sum_{\substack{u\in V^G \\ \pi(u)\neq\epsilon_V}} C_V|u-\pi(u)| 
+C_E\bigg|\sum_{\substack{e\in E^G \\ \pi(e)\neq\epsilon_E}} |\pi(e)| -  
\sum_{\substack{e\in E^G \\ \pi(e)\neq\epsilon_E}} |e|
+\sum_{\substack{f\in E^H \\ \pi^{-1}(f)=\epsilon_E}} |f| 
-\sum_{\substack{e\in E^G \\ \pi(e)=\epsilon_E}} |e| \bigg| 
+ 2\sum_{\substack{e\in E^G \\ \pi(e)=\epsilon_E}} C_E|e|\\
&=\sum_{\substack{u\in V^G \\ \pi(u)\neq\epsilon_V}} C_V|u-\pi(u)| 
+C_E\bigg|\bigg(\sum_{\substack{e\in E^G \\ \pi(e)\neq\epsilon_E}} |\pi(e)| + 
\sum_{\substack{f\in E^H \\ \pi^{-1}(f)=\epsilon_E}} |f|\bigg)
-\bigg(\sum_{\substack{e\in E^G \\ \pi(e)=\epsilon_E}} |e| 
+\sum_{\substack{e\in E^G \\ \pi(e)\neq\epsilon_E}} |e|\bigg)\bigg| 
+2\sum_{\substack{e\in E^G \\ \pi(e)=\epsilon_E}} C_E|e|\\
&=\sum_{\substack{u\in V^G \\ \pi(u)\neq\epsilon_V}} C_V|u-\pi(u)| 
+C_E\bigg|\sum_{f\in E^H} |f| 
-\sum_{e\in E^G} |e|\bigg| 
+2\sum_{\substack{e\in E^G \\ \pi(e)=\epsilon_E}} C_E|e|\\
&=\sum_{\substack{u\in V^G \\ \pi(u)\neq\epsilon_V}} C_V|u-\pi(u)| 
+C_E\big|\vol(H) - \vol(G)\big|+ 
2\sum_{\substack{e\in E^G \\ \pi(e)=\epsilon_E}} C_E|e|.
\end{align*}
This proves the result.
\end{proof}
The follow proposition provides a lower and upper bound for the $\ggd$
that are computable in polynomial-time.
\begin{proposition}[Bounding the $\ggd$]\label{prop:vol} For geometric graphs
$G,H\in\G(\R^d)$, we have
\[
C_E|\vol(G)-\vol(H)|\leq\ggd(G,H)\leq C_E|\vol(G)+\vol(H)|.
\]
\end{proposition}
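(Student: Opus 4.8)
The plan is to obtain the two inequalities essentially for free: the lower bound is an immediate consequence of \lemref{vol}, and the upper bound is witnessed by one explicitly chosen matching. Neither half requires any optimization over $\Pi(G,H)$ beyond what has already been set up.

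For the lower bound, fix an arbitrary $\pi\in\Pi(G,H)$. By \lemref{vol},
\[
\cost(\pi)\geq\sum_{\substack{u\in V^G \\ \pi(u)\neq\epsilon_V}} C_V|u-\pi(u)|
+ C_E|\vol(G)-\vol(H)|+2\min\bigg\{
    \sum_{\substack{e\in E^G \\ \pi(e)=\epsilon_E}} C_E|e|,
    \sum_{\substack{f\in E^H \\ \pi^{-1}(f)=\epsilon_E}} C_E|f|
\bigg\}.
\]
All three summands on the right are nonnegative (recall $C_V,C_E>0$ and Euclidean lengths are nonnegative), so $\cost(\pi)\geq C_E|\vol(G)-\vol(H)|$. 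Taking the minimum over $\pi\in\Pi(G,H)$ and invoking \defref{ggd} gives $\ggd(G,H)\geq C_E|\vol(G)-\vol(H)|$.

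For the upper bound, I will exhibit the ``delete everything'' matching $\pi_0$ defined by $\pi_0(u)=\epsilon_V$ for every $u\in V^G$ and $(\epsilon_V,v)\in\pi_0$ for every $v\in V^H$; this satisfies \defref{pi} since each vertex of $G$ and each vertex of $H$ has exactly one partner, namely $\epsilon_V$. Under $\pi_0$ there are no vertex translations and no edge translations (each $e\in E^G$ has an endpoint sent to $\epsilon_V$, so $\pi_0(e)=\epsilon_E$), so only the two deletion sums in \eqnref{split-ggd} survive, yielding
\[
\cost(\pi_0)=\sum_{e\in E^G}C_E|e|+\sum_{f\in E^H}C_E|f|=C_E\big(\vol(G)+\vol(H)\big)=C_E\big|\vol(G)+\vol(H)\big|,
\]
the last equality because volumes are nonnegative. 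Hence $\ggd(G,H)\leq\cost(\pi_0)=C_E|\vol(G)+\vol(H)|$, completing the proof.

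There is no real obstacle here; the only points that need a line of care are checking that $\pi_0$ is a legal matching in the sense of \defref{pi} and that every non-deletion term of \eqnref{split-ggd} genuinely vanishes for $\pi_0$. If one wanted a sharper upper bound one could instead feed a better matching into \eqnref{split-ggd}, but for this statement the trivial matching already matches the claimed bound exactly.
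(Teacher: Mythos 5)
Your proposal is correct and follows essentially the same route as the paper: the lower bound by applying \lemref{vol} to an arbitrary matching and discarding the nonnegative remaining terms, and the upper bound by evaluating the trivial matching that sends every vertex of $G$ and $H$ to $\epsilon_V$, whose cost is exactly $C_E[\vol(G)+\vol(H)]$. Your additional checks that this trivial relation is a legal matching under \defref{pi} and that all non-deletion terms of \eqnref{split-ggd} vanish are fine and only make explicit what the paper leaves implicit.
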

\begin{proof}
For any arbitrary matching $\pi\in\Pi(G,H)$, from \lemref{vol} we get
\[
C_E|\vol(G)-\vol(H)|\leq\cost(P).
\]
Since $\pi$ is arbitrary, we conclude $C_E|\vol(G)-\vol(H)|\leq\ggd(G,H)$. 
        
For the second inequality, we choose the trivial matching $\pi_0\in\Pi(G,H)$,
where $\pi_0(u)=\pi_0^{-1}(v)=\epsilon_V$ for all $u\in V^G$ and $v\in V^H$. So,
\[\ggd(G,H)\leq\cost(\pi)=C_E[\vol(G)+\vol(H)].\]
\end{proof}

As also shown in \cite{cgkss-msgg-09},the $\ggd$ is also a metric. We present
a proof here, using our notation, for the sake of completion.
\begin{proposition}[$\ggd$ is a Metric]\label{prop:ggd-metric} The $\ggd$
defines a metric on $\G_0(\R^d)$, the space of geometric graphs without any
isolated vertex. 
\end{proposition}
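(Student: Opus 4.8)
The plan is to verify the four metric axioms, reusing the pattern of \propref{ged-metric} but taking advantage of a genuine simplification over the $\ged$ case: since $V^G$ and $V^H$ are finite, $\Pi(G,H)$ is a finite set, so $\ggd(G,H)$ is attained by some optimal matching and no $\xi$-approximation is needed. \emph{Non-negativity} is immediate, since every summand in \eqnref{split-ggd} is non-negative. \emph{Symmetry} holds because $\pi\mapsto\pi^{-1}$ is a bijection from $\Pi(G,H)$ onto $\Pi(H,G)$ that preserves \eqnref{split-ggd}: the vertex- and edge-translation sums are unchanged when each matched pair $(u,\pi(u))$ is rewritten as $(\pi(u),u)$, while the two edge-deletion sums (over $e\in E^G$ with $\pi(e)=\epsilon_E$ and over $f\in E^H$ with $\pi^{-1}(f)=\epsilon_E$) merely trade places; hence the two sets of achievable costs agree and $\ggd(G,H)=\ggd(H,G)$.

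For \emph{separability}, if $G=H$ then the identity matching $\{(u,u):u\in V^G\}$ has cost $0$, so $\ggd(G,H)=0$. Conversely, suppose $\ggd(G,H)=0$, attained by $\pi$; then each of the four sums in \eqnref{split-ggd} vanishes. Because $C_E>0$ and every edge has positive length, the third and fourth sums give $\pi(e)\neq\epsilon_E$ for all $e\in E^G$ and $\pi^{-1}(f)\neq\epsilon_E$ for all $f\in E^H$. Since $G,H\in\G_0(\R^d)$ have no isolated vertices, every vertex is an endpoint of an edge, so $\pi(u)\neq\epsilon_V$ for all $u\in V^G$ and $\pi^{-1}(v)\neq\epsilon_V$ for all $v\in V^H$; thus $\pi$ restricts to a bijection $V^G\to V^H$. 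The vanishing of the first sum then forces $\pi(u)=u$ for every $u\in V^G$, so $V^G=V^H$. Finally, for $e=(u_1,u_2)\in E^G$ we have $\pi(e)\neq\epsilon_E$, so $e=(\pi(u_1),\pi(u_2))=\pi(e)$ is an edge of $H$; this gives $E^G\subseteq E^H$, and the symmetric argument through $\pi^{-1}$ gives $E^G=E^H$. Therefore $G=H$.

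The substantive axiom is the \emph{triangle inequality}. Given $G,H,I\in\G_0(\R^d)$, let $\pi_1\in\Pi(G,H)$ and $\pi_2\in\Pi(H,I)$ be optimal. I would form the composite matching $\pi\eqdef\pi_2\circ\pi_1$, defined by $\pi(u)=\epsilon_V$ whenever $\pi_1(u)=\epsilon_V$ and $\pi(u)=\pi_2(\pi_1(u))$ otherwise, and correspondingly on edges; because $\pi_1$ and $\pi_2$ are one-to-one on their non-dummy parts, one checks straight from \defref{pi} that $\pi\in\Pi(G,I)$. It then suffices to show $\cost(\pi)\le\cost(\pi_1)+\cost(\pi_2)$, which I would prove by bounding each of the four sums of $\cost(\pi)$ against pieces of $\cost(\pi_1)$ and $\cost(\pi_2)$. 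For the vertex-translation sum, the triangle inequality in $\R^d$ gives $|u-\pi(u)|\le|u-\pi_1(u)|+|\pi_1(u)-\pi_2(\pi_1(u))|$ whenever $\pi(u)\neq\epsilon_V$; summing and using injectivity of $\pi_1$ on non-dummy vertices, the first parts are bounded by the vertex-translation cost of $\pi_1$ and the second by that of $\pi_2$. For the edge sums I would classify each $e\in E^G$ by the fates of $\pi_1(e)$ and $\pi_2(\pi_1(e))$: (a) both non-dummy, where $\big||e|-|\pi(e)|\big|\le\big||e|-|\pi_1(e)|\big|+\big||\pi_1(e)|-|\pi(e)|\big|$ splits between the edge-translation costs of $\pi_1$ and $\pi_2$; (b) $\pi_1(e)=f\in E^H$ but $\pi_2(f)=\epsilon_E$, where $|e|\le\big||e|-|f|\big|+|f|$ sends the first piece to the translation cost of $\pi_1$ and the second, $|f|$, to the edge-deletion cost of $\pi_2$; and (c) $\pi_1(e)=\epsilon_E$, charged to the edge-deletion cost of $\pi_1$. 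Dually, every $g\in E^I$ with $\pi^{-1}(g)=\epsilon_E$ is charged either to the edge-deletion cost of $\pi_2$ (if $\pi_2^{-1}(g)=\epsilon_E$) or, via $|g|\le\big||g|-|f|\big|+|f|$ with $f=\pi_2^{-1}(g)$, to the edge-translation cost of $\pi_2$ together with the fourth sum of $\cost(\pi_1)$.

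I expect the only real obstacle to be the bookkeeping, namely checking that this charging scheme never uses the same term of $\cost(\pi_1)$ or $\cost(\pi_2)$ twice. This should follow from three observations: $\pi_1$ and $\pi_2$ are injective on their non-dummy edges; cases (a)--(c) partition $E^G$ (and dually the index $g$ ranges over a disjoint union); and the two ways a given term of $\cost(\pi_2)$ can be invoked correspond to edges $f\in E^H$ with $\pi_1^{-1}(f)\neq\epsilon_E$ versus $\pi_1^{-1}(f)=\epsilon_E$, which are disjoint. Granting this, summing the bounds yields $\cost(\pi)\le\cost(\pi_1)+\cost(\pi_2)$, and hence $\ggd(G,I)\le\cost(\pi)\le\cost(\pi_1)+\cost(\pi_2)=\ggd(G,H)+\ggd(H,I)$, completing the argument.
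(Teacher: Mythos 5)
Your overall route coincides with the paper's: the same four axioms, the same use of the four sums in \eqnref{split-ggd} for separability, and, for the triangle inequality, exactly the composite matching $\pi=\pi_2\circ\pi_1$ that the paper uses (the paper then simply asserts $\cost(\pi)\leq\cost(\pi_1)+\cost(\pi_2)$ as easily seen, so your charging scheme is where the real content lies). Non-negativity, symmetry, separability, and the vertex-translation part of your charging are fine, and noting that the minimum over the finite set $\Pi(G,H)$ is attained is a harmless simplification.

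The gap is in the edge bookkeeping, and it is not the double-charging issue you flag but a missing case, caused by the semantics of $\pi(e)=\epsilon_E$. In this paper an edge $e=(u_1,u_2)\in E^G$ is treated as deleted not only when an endpoint is unmatched, but also when both endpoints are matched and the pair $\pi(e)=(\pi(u_1),\pi(u_2))$ is not an edge of the target graph; this is how the cost is used, e.g., in the proof of \propref{hard-2D}, where a vertex-bijective matching still has $2N$ edges of $G$ with image $\epsilon_E$ (under the purely vertex-level reading, $\ggd$ would not even be separable or symmetric, so the proposition must be read with this convention). Consequently your trichotomy (a)--(c) is not exhaustive: an edge $e$ can have $\pi_1(e)=\epsilon_E$ because $(\pi_1(u_1),\pi_1(u_2))\notin E^H$, while the composite pair $\pi(e)$ \emph{is} an edge of $I$. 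Such an $e$ contributes $C_E\big||e|-|\pi(e)|\big|$ to $\cost(\pi)$, which is not bounded by the $C_E|e|$ you charge in case (c); likewise, in cases (a) and (b) the quantities $\big||\pi_1(e)|-|\pi(e)|\big|$ and $|\pi_1(e)|$ are terms of $\cost(\pi_2)$ only when $\pi_1(e)$ is an actual edge of $H$, which your wording does not guarantee. The missing case can be patched: bound $\big||e|-|\pi(e)|\big|\leq|e|+|\pi(e)|$, charge $C_E|e|$ to the deletion of $e$ by $\pi_1$ and $C_E|\pi(e)|$ to the fourth sum of $\cost(\pi_2)$, which is legitimate because $\pi_2^{-1}(\pi(e))=\pi_1(e)\notin E^H$, so the edge $\pi(e)\in E^I$ is deleted by $\pi_2$; this cannot collide with your dual charging, since there $g$ ranges over edges of $I$ with $\pi^{-1}(g)=\epsilon_E$, whereas here $\pi^{-1}(\pi(e))=e\in E^G$. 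With this extra case (and cases (a)--(b) restated so that "non-dummy" means "maps to an actual edge"), your argument goes through and fills in precisely the step the paper leaves implicit.
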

\begin{proof}
\textbf{Non-negativity.} Since the cost of any matching in $\Pi(G,H)$ is
non-negative, \defref{ggd} implies that $\ggd(G,H)$ is non-negative for any
$G,H\in\G_0(\R^d)$. \\
    
\noindent\textbf{Separability.} If $\ggd(G,H)=0$, then there is $\pi\in\Pi(G,H)$
with $\cost(\pi)=0$. So, all the four summands in \eqnref{split-ggd} are
identically zero. In particular, the third and fourth summands imply that no
edge has been deleted from $G$ or $H$ by $\pi$, i.e., $|E^G|=|E^H|$. Since the
graphs do not have any isolated vertex, this implies that
$\pi(u)\neq\epsilon_V,\pi(v)\neq\epsilon_V$ for all $u\in V^G$ and $v\in V^H$.
As a result, $|E^G|=|E^H|$. Moreover, the first summand of \eqnref{split-ggd}
implies that $\pi(u)=u$ for all $u\in V^G$. Therefore, $G=H$. \\

\noindent\textbf{Symmetry.} We conclude that $\ggd(G,H)=\ggd(H,G)$ due to the
fact that any matching in $\Pi(G,H)$ induces a matching in $\Pi(H,G)$ with
exactly the same cost and vice versa. \\
    
\noindent\textbf{Triangle Inequality.} For the triangle inequality, let us
assume that $\cost(\pi_1)=\ggd(G,H)$ and $\cost(\pi_2)=\ggd(H,I)$ for some
$\pi_1\in\Pi(G,H)$ and $\pi_2\in\Pi(H,I)$. For any $u\in V^G$ and $v\in V^I$,
define $\pi\in\Pi(G,I)$ such that:
\[
\pi(u)=\begin{cases}
\pi_2\circ\pi_1(u),&\text{ if }\pi_1(u)\neq\epsilon_V \\
\epsilon_V,&\text{ otherwise}
\end{cases}
\]
and
\[
\pi^{-1}(v)=\begin{cases}
\pi_1^{-1}\circ\pi_2^{-1}(v),&\text{ if }\pi_2^{-1}(u)\neq\epsilon_V \\
\epsilon_V,&\text{ otherwise}
\end{cases}
\]
Using the triangle inequality, it can be easily seen from \eqnref{split-ggd}
that $\cost(\pi)\leq\cost(\pi_1)+\cost(\pi_2)$. So,
\begin{align*}
\ged(G,I) &\leq\cost(\pi),\text{ from the Definition of }\ged  \\ 
&\leq\cost(\pi_1)+\cost(\pi_2) \\ 
&=\ggd(G,H)+\ggd(H,I).
\end{align*}
Therefore, we get $\ggd(G,I)\leq\ggd(G,H)+\ggd(H,I)$ as desired.
\end{proof}

\subsection{Comparing GED and GGD}\label{sec:compare} As we now have the two
notions of distances under our belts, the question of how they compare arises
naturally. We have already pointed out that the analogous notions for attributed
graphs yield equivalent distances. To our surprise, they are not generally equal
for geometric graphs, as the following proposition demonstrates.
\begin{proposition}\label{prop:tight}
Given any $D>0$, there exist graphs $G,H\in\G(\R)$ such that 
\[\ggd(G,H)=D\text{ and }\ged(G,H)=\left(1+\frac{C_E}{C_V}\right)D.\] In
particular, $\ggd(G,H)<\ged(G,H)$.
\end{proposition}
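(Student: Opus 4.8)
The plan is to realize the extremal ratio with a pair of single‑edge geometric graphs on the line, one a translate of the other, so that the ``wiggling'' phenomenon of \figref{wiggle} forces $\ged$ to pay, per unit of vertex displacement, a surcharge of $C_E$ that $\ggd$ does not see. Set $b:=D/(2C_V)$ and fix a constant $a$ much larger than $b$ (any $a\ge(1+C_V/C_E+C_E/C_V)\,b$ suffices). Let $G\in\G(\R)$ have vertices $u_1=0$ and $u_2=a$ joined by an edge $e$, and let $H\in\G(\R)$ have vertices $v_1=b$ and $v_2=a+b$ joined by an edge $f$; thus $H$ is $G$ translated by $b$, and $|e|=|f|=a$.

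First I would compute $\ggd(G,H)$ by enumerating the finite set $\Pi(G,H)$: each of $u_1,u_2$ goes to one of $v_1,v_2,\epsilon_V$, injectively on its non‑$\epsilon_V$ part. The ``identity'' matching $u_i\mapsto v_i$ has cost $2C_Vb$ (both vertices move by $b$ and the edge length is unchanged); the ``flip'' $u_1\mapsto v_2$, $u_2\mapsto v_1$ has cost $C_V(a+b)+C_V(a-b)=2C_Va$; and every matching sending some vertex to $\epsilon_V$ deletes both $e$ and $f$, so by \eqnref{split-ggd} it costs at least $2C_Ea$. Because $a$ is large, the minimum is $2C_Vb$, so $\ggd(G,H)=2C_Vb=D$.

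For $\ged$, the upper bound is the two‑step edit path that translates $u_1$ from $0$ to $b$ (the edge length drops from $a$ to $a-b$, at cost $C_Vb+C_Eb$) and then translates $u_2$ from $a$ to $a+b$ (the length returns to $a$, at cost $C_Vb+C_Eb$), ending at $H$; hence $\ged(G,H)\le 2(C_V+C_E)b$. For the lower bound, let $P\in\mathcal P(G,H)$ be arbitrary. If $P$ does not carry $e$ onto $f$, then $P(e)=\epsilon_E$ and $P^{-1}(f)=\epsilon_E$, so \eqnref{split-1} gives $\cost(P)\ge C_E|e|+C_E|f|=2C_Ea\ge 2(C_V+C_E)b$ by the choice of $a$. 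So assume $P(e)=f$, hence $\{P(u_1),P(u_2)\}=\{v_1,v_2\}$. In the ``flipped'' sub‑case $P(u_1)=v_2$, $P(u_2)=v_1$, \lemref{vertex} gives $\cost(P)\ge C_V|0-(a+b)|+C_V|a-b|=2C_Va\ge 2(C_V+C_E)b$. The remaining sub‑case $P(u_1)=v_1$, $P(u_2)=v_2$ is the heart of the argument.

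In that sub‑case no vertex or edge is deleted or inserted, so \eqnref{split} bounds $\cost(P)$ below by the orbit costs of $u_1$, $u_2$ and $e$ alone, namely $C_V\,T(x)$, $C_V\,T(y)$ and $C_E\,T(\ell)$, where $x=\{x_i\}$ and $y=\{y_i\}$ are the orbits of $u_1$ (running from $0$ to $b$) and $u_2$ (running from $a$ to $a+b$), $\ell_i:=|x_i-y_i|$ (the endpoints of $e$ are always $u_1,u_2$, so $|e_i|=\ell_i$), and $T(z):=\sum_i|z_i-z_{i-1}|$ denotes total variation. Since each elementary operation moves a single endpoint, the signed length $s_i:=x_i-y_i$ satisfies $T(s)=T(x)+T(y)\ge|x_k-x_0|+|y_k-y_0|=2b$. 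If $s_i\le 0$ for every $i$, then $\ell=-s$, so $T(\ell)=T(s)$ and $\cost(P)\ge(C_V+C_E)\,T(s)\ge 2(C_V+C_E)b$, as required. If instead $s_{i_0}>0$ for some $i_0$, then $s$, which starts and ends at $-a$, must climb above $0$ and return, forcing $T(s)\ge 2a$ and hence $\cost(P)\ge C_V\,T(s)\ge 2C_Va\ge 2(C_V+C_E)b$. In all cases $\cost(P)\ge 2(C_V+C_E)b$, so $\ged(G,H)=2(C_V+C_E)b=(1+C_E/C_V)D$, which strictly exceeds $\ggd(G,H)=D$ because $C_E>0$. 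The one delicate step is exactly this last sub‑case --- ruling out that a long alternating sequence of vertex translations slides $e$ onto $f$ more cheaply than $2(C_V+C_E)b$ --- and the total‑variation bookkeeping of $s_i=x_i-y_i$ is what resolves it: moving either endpoint changes the edge length by the same amount unless the two endpoints cross, and a crossing is made prohibitively expensive by choosing $a$ large.
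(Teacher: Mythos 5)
Your construction and overall skeleton coincide with the paper's: the same pair of shifted single-edge graphs on the line (your $b,a$ play the roles of the paper's $x,L$, with $L$ pinned to a specific multiple of $x$ where you merely take $a$ large), the same identity matching and exclusion of non-bijective and flipped matchings for $\ggd(G,H)=D$, the same two-step edit path for the upper bound on $\ged$, and the same first case (edge deleted, hence cost at least $2C_E a$ via \eqnref{split-1}). Where you genuinely diverge is the decisive sub-case $P(u_1)=v_1$, $P(u_2)=v_2$: the paper's Case~II argues informally by extracting the subsequence of translations that do not flip the order of the endpoints and asserting that each unit of vertex motion there also incurs a unit of edge-length change, which leaves the crossing scenario and possible auxiliary insert-then-delete operations implicit. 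You instead lower-bound $\cost(P)$ by the orbit costs $C_V T(x)+C_V T(y)+C_E T(\ell)$ from \eqnref{split} and run a total-variation argument on the signed difference $s_i=x_i-y_i$, observing $T(s)=T(x)+T(y)\ge 2b$ since each edit moves at most one endpoint, and splitting into the non-crossing case (where $T(\ell)=T(s)$ gives the $(C_V+C_E)$ rate) and the crossing case (where $s$ must travel from $-a$ above $0$ and back, forcing $T(s)\ge 2a$, which your choice of $a$ makes prohibitive). This buys a fully rigorous treatment of exactly the step the paper handles loosely (including the typo-laden bound in its Case~II), at the cost of slightly more bookkeeping; the paper's version is shorter and ties $L$ to explicit constants, which makes the tightness claim after \propref{compare} immediate, while your argument works for any sufficiently large $a$ and makes the "wiggling cannot help" intuition into a precise statement. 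Your proof is correct.
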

\begin{proof}
We take two graphs $G,H\in\G(\R)$ as shown in \figref{wiggle-1}. In each graph,
the two vertices are separated by a distance $L$, whereas the second graph is a
copy of the first but shifted by $x$. We also choose
\[ x=\frac{D}{2C_V}\text{ and }L=\left(1+\frac{2C_V}{C_E}\right)x. \] 

\begin{figure}[tbh]
    \centering
    \begin{tikzpicture}[scale=0.8]
\foreach \i in {-1,...,6} { \draw [very thin, gray] (\i,0) -- (\i+1,0); }
\foreach \i in {-1,...,6} { \draw [very thin, gray] (\i,1) -- (\i+1,1); }
\filldraw[thick] (1,0) circle (2pt) node[above] {$v_1$} -- (4,0) circle (2pt)
node[above] {$v_2$}; \filldraw[thick] (0,1) circle (2pt) node[below] {$u_1$} --
(3,1) circle (2pt) node[below] {$u_2$}; \node[above] at (2.5,0) {$H$};
\node[above] at (1.5,1.2) {$G$}; \draw[|<->|] (3,1.5) -- node[auto,swap] {$x$}
(4,1.5);
\draw[|<->|] (1,-.5) -- node[auto,swap] {$L$}
(4,-.5);
\end{tikzpicture}
\caption{The graphs $G$ (top) and $H$ (bottom) are embedded in the real line,
where $u_2-u_1=v_2-v_1=L$ and $v_2-u_2=v_1-u_1=x$.}
\label{fig:wiggle-1}
\end{figure}
To see that $\ggd(G,H)=D$, we consider the matching $\pi(u_i)=v_i$ for $i=1,2$.
The cost of the matching is
\[\cost(\pi)=C_V\sum_{i=1}^2|u_i-v_i|=C_V\sum_{i=1}^2x=2C_Vx=D.\] It is worth
noting here that a matching $\pi'$ that is not bijective on the vertex sets has
cost
\[ \cost(\pi')\geq C_EL>C_E\times\frac{2C_V}{C_E}x=D=\cost(\pi). \] Since $L>x$,
the cost of $\pi$ is also (strictly) smaller that $2C_VL$, which is the cost of
the other possible bijective matching. So, we have $\ggd(G,H)=D$.
        
In order to compute $\ged(G,H)$, we consider the edit path $P_0$ that moves the
vertex $u_1$ to $v_1$, then moves $u_2$ to $v_2$. The cost of $P_0$ is 
\[
2C_Vx+2C_Ex=2C_Vx\left(1+\frac{C_E}{C_V}\right)=\left(1+\frac{C_E}{C_V}\right)D.
\]
We now claim that the cost of \emph{any} edit path $P$ is at least
$(1+C_E/C_V)D$. Consider the following two cases:
    
\noindent\textbf{Case I.} If $P(u_1,u_2)=\epsilon_E$, then from
\eqnref{split-1}, we have
\[
\cost(P)\geq2C_EL=2(C_E+2C_V)x=2(C_E+2C_V)\frac{D}{2C_V}=(2+C_E/C_V)D>
(1+C_E/C_V)D.
\]
    
\noindent\textbf{Case II.} For this case, we assume that
$P(u_1,u_2)\neq\epsilon_E$. So, $P$ contains only vertex translations. Let
$O=\{o_i\}_{i=1}^k$ be the subsequence of $P$ containing only those translations
that do not flip the order of the endpoints of the incident edge. Due to the
position of $G$ and $H$, it is evident that $O$ is non-empty. Moreover, the
vertices must travel at least $x$ distance each under $O$. When an endpoint $u$
is moved to a location $w\in\R$ by such an $o_i$, the associated cost of
translating the edge becomes $C_E|w-u|$. Therefore, the cost 
\[
\cost(P)\geq\cost(O)\geq2C_Vx + 2C_E2x=2(C_E+C_V)\frac{D}{2C_V}=(1+C_E/C_V)D.    
\]
Considering the above the cases, we conclude that $\ged(G,H)=(1+C_E/C_V)D$.
\end{proof}

More generally, we prove that following result to compare the two distances. \begin{proposition}\label{prop:compare} For any two geometric graphs
$G,H\in\G(\R^d)$, we have
\[ 
    \ggd(G,H)\leq\ged(G,H)\leq\left(1+\Delta\frac{C_E}{C_V}\right)\ggd(G,H),
\]
where $\Delta$ denotes the maximum degree of the graphs $G,H$.
\end{proposition}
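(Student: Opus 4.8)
The plan is to prove the two inequalities by opposite constructions: turn an edit path into a matching for the lower bound, and a matching into an edit path for the upper bound. To prove $\ggd(G,H)\le\ged(G,H)$, take any $P\in\mathcal P(G,H)$ and define $\pi\in\Pi(G,H)$ by $\pi(u):=P(u)$ for every $u\in V^G$ (so $\pi(u)=\epsilon_V$ precisely when $P$ deletes $u$). Since $P(G)=H$ is a simple geometric graph, $P$ restricts to a bijection between the surviving vertices of $G$ and the non-inserted vertices of $H$, so $\pi$ really is a matching, and it suffices to show $\cost(\pi)\le\cost(P)$. The vertex-translation sum of \eqnref{split-ggd} coincides verbatim with the first sum of \eqnref{split-1}, so the work is in the edge sums: by \lemref{edge} and \eqnref{split-1}, $P$ pays at least $C_E\big||e|-|\pi(e)|\big|$ for every edge $e$ of $G$ it keeps, at least $C_E|e|$ for every edge it deletes, and at least $C_E|f|$ for every edge $f$ of $H$ it creates. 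The only delicate case is a $G$-edge $e$ that $P$ deletes while retaining both of its endpoints and later re-creates as an edge between their images; there $\cost(\pi)$ charges $C_E\big||e|-|\pi(e)|\big|$, which is dominated by $C_E|e|+C_E|\pi(e)|$, i.e.\ by what $P$ pays for deleting $e$ plus creating the new edge. Summing, $\cost(\pi)\le\cost(P)$, hence $\ggd(G,H)\le\cost(P)$; taking the infimum over $P$ gives the inequality.

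For the upper bound let $\pi$ be an optimal matching, $\cost(\pi)=\ggd(G,H)$, and I would realize it by the edit path $P$ that: (i) deletes every edge of $G$ that $\pi$ does not carry onto an edge of $H$; (ii) deletes the resulting isolated vertices (cost $0$); (iii) translates each surviving vertex $u$ straight to $\pi(u)$, processing the vertices one at a time; (iv) inserts the vertices of $H$ left unmatched by $\pi^{-1}$ (cost $0$); (v) adds every edge of $H$ that is not yet present. Steps (i) and (v) cost exactly the two edge-deletion sums of \eqnref{split-ggd}. For step (iii) the reverse triangle inequality shows that moving $u$ to $\pi(u)$ changes the length of each incident surviving edge by at most $|u-\pi(u)|$, whatever the current position of the other endpoint; since each surviving edge is touched exactly twice over step (iii) (once per endpoint), the total length change incurred there is at most $\sum_{(u_1,u_2)}\big(|u_1-\pi(u_1)|+|u_2-\pi(u_2)|\big)=\sum_u\deg_G(u)\,|u-\pi(u)|\le\Delta\sum_{u:\pi(u)\neq\epsilon_V}|u-\pi(u)|$. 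Therefore
\[
\cost(P)\ \le\ \cost(\pi)+C_E\,\Delta\sum_{u:\pi(u)\neq\epsilon_V}|u-\pi(u)|,
\]
and since the vertex-translation sum of \eqnref{split-ggd} gives $\sum_{u:\pi(u)\neq\epsilon_V}|u-\pi(u)|\le\cost(\pi)/C_V$, we conclude $\ged(G,H)\le\cost(P)\le\big(1+\Delta\,C_E/C_V\big)\cost(\pi)=\big(1+\Delta\,C_E/C_V\big)\ggd(G,H)$.

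I expect the crux to be the cost analysis of step (iii): one must verify that the sequence above is a legal edit path — each operation acts on the object produced so far, and no intermediate object needs to be a geometric graph — and, more importantly, that translating the surviving vertices sequentially never charges an edge more than $C_E|u-\pi(u)|$ for each move of an endpoint $u$, even though that edge may be stretched and then relaxed while its two endpoints travel. It is precisely this per-move bound that turns the per-vertex factor $\deg_G(u)$, and hence the maximum degree $\Delta$, into the coefficient $1+\Delta C_E/C_V$. The lower bound is conceptually straightforward apart from the small bookkeeping adjustment noted above for a re-created edge between two surviving vertices.
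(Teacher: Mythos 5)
Your proposal is correct and follows essentially the same route as the paper: the lower bound via the induced matching $\pi_P$ compared against the bound of \lemref{part-1}, and the upper bound via the canonical edit path (delete unmatched edges and vertices, translate matched vertices, then insert), with the reverse triangle inequality bounding each incident edge's stretch by the endpoint's displacement, which is exactly where the factor $\Delta\,C_E/C_V$ arises. Your explicit treatment of a $G$-edge that is deleted and re-created between surviving endpoints is in fact a more careful bookkeeping of a case the paper glosses over with the claim $P(e)=\pi_P(e)$.
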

\begin{proof}
Take an arbitrary edit path $P\in\mathcal P(G,H)$. Let us define a matching
$\pi_P\in\Pi(G,H)$ such that 
\[
    \pi_P\eqdef\{(u,P(u))\mid u\in V^G\}\cup\{(P^{-1}(v),v)\mid v\in V^H\}.
\]
This definition of $\pi_P$ implies that $P(u)=\pi_P(u)$ for all $u\in V^G$,
$P(e)=\pi_P(e)$ for all $e\in E^G$, and $P^{-1}(f)=\pi_P^{-1}(f)$ for all $f\in
E^H$. From \eqnref{split-ggd} and \lemref{part-1} it follows that
$\cost(\pi_P)\leq\cost(P)$. The definition of $\ggd(G,H)$ then implies that
\[
    \ggd(G,H)\leq\cost(\pi_P)\leq\cost(P).
\] 
Since $P$ is chosen arbitrarily, the definition of $\ged(G,H)$ then implies the
first inequality.

For the second inequality, we take an arbitrary $\pi\in\Pi(G,H)$. From $\pi$, we
define an edit path $P_\pi$ to be the sequence $(D_E,D_V,T_V,I_V,I_E)$ of edit
operations, where 
\begin{enumerate}[(i)]
\item $D_E$ is a sequence of deletions of edges $e\in E^G$ with
$\pi(e)=\epsilon_E$
\item $D_V$ is a sequence of deletions of vertices $u\in V^G$ with
$\pi(u)=\epsilon_V$,
\item $T_V$ is a sequence of translations of vertices $u\in V^G$ with
$\pi(u)\neq\epsilon_V$ to $\pi(u)$,
\item $I_V$ is a sequence of insertions of vertices $v\in V^H$ with
$\pi^{-1}(v)=\epsilon_V$, and
\item $I_E$ is a sequence of insertions of edges $f\in E^H$ with
$\pi^{-1}(f)=\epsilon_E$.
\end{enumerate}
Each of the above sequences (i)--(v) is unique up to the ordering of the its
operations. Also in $P_\pi$, the edges are deleted in $D_E$ before deleting
their endpoints in $D_V$, and the edges are inserted in $I_E$ only after
inserting their endpoints in $I_V$. Consequently, $P_\pi$ defines a legal edit
path between $G$ and $H$, i.e., $P_\pi\in\mathcal{P}(G,H)$. We claim that
\[
\cost(P_\pi)\leq\left(1+\Delta\frac{C_E}{C_V}\right)\cost(\pi).
\]
To prove the claim, we note that $P_\pi$ does not insert any vertex or edge that
have been later deleted. As a result, the item (g) above \eqnref{split} has a
zero cost. So, \eqnref{split} is, in fact, an equality:
\begin{align*}
\cost(P_\pi)=&
\sum_{\substack{u\in V^G \\ \pi(u)\neq\epsilon_V}} \cost_{P_\pi}(u)
+\sum_{\substack{u\in V^G \\ \pi(u)=\epsilon_V}}\cost_{P_\pi}(u)
+\sum_{\substack{v\in V^H \\ \pi^{-1}(u)=\epsilon_V}}\cost_{P_\pi^{-1}}(v) \\
&+\sum_{\substack{e\in E^G \\ \pi(e)\neq\epsilon_E}}\cost_{P_\pi}(e)
+\sum_{\substack{e\in E^G \\ \pi(e)=\epsilon_E}}\cost_{P_\pi}(e)
+\sum_{\substack{f\in E^H \\ \pi^{-1}(f)=\epsilon_E}}\cost_{P_\pi}(f)    
\end{align*}
Moreover, a deleted (resp. inserted) vertex has never been translated, yielding
a zero cost for its orbit. So, the second and the third summands are identically
zero. We can then write
\begin{align*}
\cost(P_\pi) &=
\sum_{\substack{u\in V^G \\ \pi(u)\neq\epsilon_V}} \cost_{P_\pi}(u)
+\sum_{\substack{e\in E^G \\ \pi(e)\neq\epsilon_E}}\cost_{P_\pi}(e)
+\sum_{\substack{e\in E^G \\ \pi(e)=\epsilon_E}}\cost_{P_\pi}(e)
+\sum_{\substack{f\in E^H \\ \pi^{-1}(f)=\epsilon_E}}\cost_{P_\pi}(f) \\
&=\sum_{\substack{u\in V^G \\ \pi(u)\neq\epsilon_V}} C_V\lvert u-\pi(u)\rvert
+\sum_{\substack{e\in E^G \\ \pi(e)\neq\epsilon_E}} 
\cost_{P_\pi}(e)
+\sum_{\substack{e\in E^G \\ \pi(e)=\epsilon_E}} C_E |e|
+\sum_{\substack{f\in E^H \\ \pi^{-1}(f)=\epsilon_E}}C_E |f| \\
&=\left[\sum_{\substack{u\in V^G \\ \pi(u)\neq\epsilon_V}} C_V\lvert u-\pi(u)\rvert+\sum_{\substack{e\in E^G \\ \pi(e)=\epsilon_E}} C_E |e|
+\sum_{\substack{f\in E^H \\ \pi^{-1}(f)=\epsilon_E}}C_E |f|
\right]+\sum_{\substack{e\in E^G \\ \pi(e)\neq\epsilon_E}} 
\cost_{P_\pi}(e) \\
&\leq\cost(\pi)+\sum_{\substack{e\in E^G \\ \pi(e)\neq\epsilon_E}}\cost_{P_\pi}(e) \\
\end{align*}
In order to get upper bound on the last term, we observe for any edge
$e=(u_1,u_2)\in E^G$ with $\pi(e)\neq\epsilon_E$ that its orbit under $T_V$ is
$\{(u_1, u_2), (u_1, \pi(u_2)), (\pi(u_1), \pi(u_2))\}$. The cost of the orbit
of each $e$ then is
\[
    C_E\left(\big||u_1-\pi(u_2)|-|u_1-u_2|\big|+\big||\pi(u_1)-\pi(u_2)|-|u_1-\pi(u_2)|\big|\right)
    \leq C_E(|u_2-\pi(u_2)|+|u_1-\pi(u_1)|).
\]
So,
\begin{align*}
    \cost(P_\pi) &\leq\cost(\pi)+\sum_{\substack{e\in E^G \\ \pi(e)\neq\epsilon_E}}\cost_{P_\pi}(e) \\
    &\leq\cost(\pi)+\sum_{\substack{e=(u_1,u_2)\in E^G \\ \pi(e)\neq\epsilon_E}}
    C_E(|u_2-\pi(u_2)|+|u_1-\pi(u_1)|)\\
    &\leq\cost(\pi)+\Delta\sum_{\substack{u\in E^V \\ \pi(u)\neq\epsilon_V}}
    C_E|u-\pi(u)| \\
    &\leq\cost(\pi)+\Delta\frac{C_E}{C_V}\sum_{\substack{u\in E^V \\ \pi(u)\neq\epsilon_V}}
    C_V|u-\pi(u)| \\
    &\leq\cost(\pi)+\Delta\frac{C_E}{C_V}\cost(\pi)
    =\left(1+\Delta\frac{C_E}{C_V}\right)\cost(\pi).
\end{align*}
By the definition $\ged$, it is implies that
$\ged(G,H)\leq\left(1+\Delta\frac{C_E}{C_V}\right)\cost(\pi)$. Since $\pi$ is
chosen arbitrarily, we then conclude from the definition of $\ggd$ that 
$\ged(G,H)\leq\left(1+\Delta\frac{C_E}{C_V}\right)\ggd(G,H)$.
\end{proof}
We remark that the configuration in \figref{wiggle} and \propref{tight} show
that the bounds presented in \propref{compare} are, in fact, tight.

\section{Computational Complexity}\label{sec:complexity} In this section, we
discuss the computational aspects of the $\ggd$. The computation is
algorithmically feasible, since the there are only a finite number of matchings
between two graphs. However, it has been already shown in \cite{cgkss-msgg-09}
that the distance is generally hard to compute. We define the decision problem
as follows.
\begin{definition}[PROBLEM $\ggd$]
Given geometric graphs $G,H\in\G(\R^d)$ and $\tau\geq0$, is there a matching
$\pi\in\Pi(G,H)$ such that $\cost(\pi)\leq\tau$?
\end{definition}
In \cite{cgkss-msgg-09}, the authors show that PROBLEM $\ggd$ is
$\mathcal{NP}$-hard for non-planar graphs. For planar graphs, however, its
$\mathcal{NP}$-hardness is proved under the very strict condition that
$C_V<<C_E$. In both cases, the problem instances seem \emph{non-practical}. In
\propref{hard-2D}, we prove a stronger result that the problem is
$\mathcal{NP}$-hard, even if the graphs are planar and arbitrary $C_V,C_E$ are
allowed. Our reduction is from the well-known $3$-PARTITION problem. 
\begin{definition}[Problem $3$-PARTITION]\label{def:3-part} Given positive
integers $N>1$, $B$ and a multiset of positive integers
$S=\{a_1,a_2,\ldots,a_{3N}\}$ so that $\frac{B}{4}<a_i<\frac{B}{2}$ and
$\sum_{i=1}^{3N} a_i=NB$, does there exist a partition of $S$ into $N$ multisets
$S_1,S_2,\ldots,S_N$ such that $|S_i|=3$ and $\sum_{a\in S_i} a=B$ for all
$1\leq i\leq N$?
\end{definition}
The problem is known to be strongly $\mathcal{NP}$-complete
\cite{10.5555/574848}. We reduce an instance $\mathcal{I}:=(N,B,S)$ of
$3$-PARTITION to an instance of PROBLEM $\ggd$.
\begin{proposition}[Hardness of PROBLEM $\ggd$]\label{prop:hard-2D} The PROBLEM
$\ggd$ is $\mathcal{NP}$-hard to decide. This result holds even if
\begin{enumerate}[(i)]
    \item the input graphs are embedded in $\R^2$, and 
    \item the cost coefficients $C_E,C_V$ are arbitrary.
\end{enumerate}
\end{proposition}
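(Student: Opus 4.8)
The plan is a polynomial-time reduction from the (strongly $\mathcal{NP}$-complete) $3$-PARTITION problem, as the paper indicates. Given an instance $\mathcal I=(N,B,S)$ with $S=\{a_1,\dots,a_{3N}\}$, $B/4<a_i<B/2$, and $\sum_i a_i=NB$, I would construct, in time polynomial in $N$ and $B$ --- which suffices, since strong $\mathcal{NP}$-completeness lets us assume $B$ is polynomially bounded --- two planar geometric graphs $G,H\in\G(\R^2)$ and a threshold $\tau\ge 0$ with $\ggd(G,H)\le\tau$ if and only if $\mathcal I$ is a yes-instance. The graph $G$ is built from $3N$ pairwise-disjoint \emph{item gadgets}, the $i$-th carrying a length proportional to $a_i$, and $H$ from $N$ pairwise-disjoint \emph{bin gadgets}, each of ``capacity'' proportional to $B$; the whole configuration is drawn with a large scaling parameter and with the $N$ bins in rotationally symmetric positions (e.g.\ spread around a large circle with the items clustered near its center). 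A matching realizing $\ggd(G,H)$ pairs a subgraph of $G$ isomorphically with a subgraph of $H$, deletes the remaining edges of both, and pays $C_V$ per unit of vertex displacement, $C_E$ per unit of matched-edge length change, and $C_E$ times the length of every deleted edge; the gadgets are engineered so that any such matching of cost at most $\tau$ is forced to (i) route each item gadget as a whole into exactly one bin gadget (leaving an item unmatched, or splitting it across two bins, costs strictly more than $\tau$, by the large lengths and the large inter-bin distances), (ii) incur a total vertex-transport cost that does not depend on which items go to which bins (this is what the rotationally symmetric placement buys us), and (iii) pay, per bin, an extra $C_E$-amount that is zero exactly when the item-lengths it receives sum to $B$ (overflow is charged as deleted overhang, underflow as uncovered capacity). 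The scaling parameter, the inter-bin distance, and $\tau$ are chosen as explicit polynomials in $N,B,C_V,C_E$, so the construction is uniform in the cost coefficients, and it introduces no crossings.

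Granting (i)--(iii), the equivalence follows quickly. If $S_1,\dots,S_N$ is a valid $3$-partition, the matching $\pi^\star$ routing the items of $S_j$ into bin $j$ incurs, by (iii), no extra $C_E$-cost at any bin, so $\cost(\pi^\star)$ is just the assignment-independent transport cost of (ii) plus a fixed docking overhead; set $\tau$ to that value. Conversely, suppose $\cost(\pi)\le\tau$. By (i), $\pi$ induces an assignment of each $a_i$ to a bin, hence a partition $S_1,\dots,S_N$ of $S$; by (ii) the transport part of $\cost(\pi)$ equals the value from the forward direction, so the bound $\cost(\pi)\le\tau$ forces, through (iii), that $\sum_{a\in S_j}a=B$ for every $j$; and since $B/4<a_i<B/2$, any subset of $S$ summing to $B$ has exactly three elements, so $|S_j|=3$ for all $j$ and $\{S_j\}$ is a valid $3$-partition. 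Thus PROBLEM $\ggd$ is $\mathcal{NP}$-hard even for planar inputs and arbitrary $C_V,C_E$.

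The substantive part --- and the reason the planar hardness proof of \cite{cgkss-msgg-09} had to assume $C_V\ll C_E$ --- is to realize (i)--(iii) simultaneously for \emph{every} positive pair $(C_V,C_E)$, since the cost terms pull against one another: making item-splitting prohibitive wants the bins far apart, but far-apart bins inflate the $C_V$-transport and create an incentive to crowd items into whichever bins are cheapest to reach --- precisely the effect the rotationally symmetric placement must neutralize, and precisely what is delicate when $C_V$ and $C_E$ differ wildly. I would handle it by bounding $\cost(\pi)$ below, via \lemref{part-1} and \lemref{vol}, by a sum of manifestly nonnegative per-gadget and per-bin contributions; then showing that each deviation from the intended matching --- an unmatched or split item, a bin with the wrong number of items, an unbalanced bin --- injects a definite additive penalty that, once the scaling parameter and inter-bin distance are tuned, exceeds the (assignment-independent) slack no matter what $C_V,C_E$ are; and finally reading off $\tau$. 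Proving that the transport total is genuinely assignment-independent, and that the residual per-bin bound is an honest lower bound and not merely the cost of the obvious matching, is the crux.
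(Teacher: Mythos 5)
Your skeleton (reduction from $3$-PARTITION, using its strong $\mathcal{NP}$-completeness, items-to-bins gadgets, a threshold $\tau$, and lower bounds via \lemref{part-1}/\lemref{vol}) matches the paper's, and you correctly locate the difficulty: making the argument survive \emph{every} ratio $C_V/C_E$. But the mechanism you propose for that step does not work, and you concede it is unproven. Exact assignment-independence of the vertex-transport cost (your item (ii)) cannot be achieved by placing the bins rotationally symmetrically around a large circle: the item gadgets occupy distinct positions near the center and, worse, the slots inside a single bin gadget occupy distinct positions whose spread is proportional to the very length scale that encodes the numbers $a_i$. So the transport cost varies with the assignment by roughly $C_V\times(\text{gadget length scale})$ per vertex, while your imbalance penalty (item (iii)) is only $C_E\times(\text{gadget length scale})$; when $C_V\gg C_E$ the assignment-dependent variation swamps the penalty, which is exactly the regime that forced the $C_V\ll C_E$ assumption in \cite{cgkss-msgg-09}. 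Pushing the bins farther apart only inflates the $C_V$-terms further, and shrinking the gadgets shrinks the $C_E$-penalty at the same rate, so no single ``large scaling parameter'' resolves the tension.

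The paper escapes this with an \emph{anisotropic} construction rather than a symmetric one: all blobs of $G$ and $H$ are packed side by side into a bounding box of width $x=\tau/(2C_V(N+1)NB)$ and height $L=\tau/(2C_E(N+1))$, so every possible vertex displacement is at most $x$ (total translation cost at most $\tau/(N+1)$, i.e.\ uniformly negligible rather than assignment-independent), while every edge has length at least $L$, so a single unavoidable extra edge deletion already costs $2C_E L=\tau/(N+1)$ beyond the budget. Because $x$ scales like $1/C_V$ and $L$ like $1/C_E$, the two cost types are decoupled for arbitrary coefficients. Moreover, the paper's ladder-shaped blobs let the partition be read off combinatorially: one first rules out the empty matching by a volume bound, then shows any matching of cost at most $\tau$ must be a vertex bijection that cannot split a blob of $G$ across two blobs of $H$ (an unmatched vertex or a split forces an edge deletion of length at least $L$, hence cost $>\tau$ by \lemref{vol}); the condition $\sum_{a\in S_i}a=B$ then follows from counting vertices, not from any length-overflow accounting. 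Your proposal would additionally have to exclude partial matchings and shifts of an item inside a bin to make the overflow/underflow charge honest, which the paper's vertex-counting argument sidesteps. As it stands, items (ii) and (iii) are the load-bearing claims of your reduction and neither is established, so the proof has a genuine gap precisely at the point the proposition improves on prior work.
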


\begin{proof}
Given an instance $\mathcal{I}:=(N,B,S)$ of $3$-PARTITION, we construct two
planar graphs $G,H$ such that the existence of a $3$-PARTITION of $S$ implies
$\ggd(G,H)\leq\tau$, otherwise $\ggd(G,H)>\tau$.

We now describe the construction of $G$ and $H$. Each of them will have a
certain number of connected components, which we call \emph{blobs}. A blob of
size $k$ is a connected block of $k$ vertices $\{u_1,u_2,\ldots,u_k\}$ in the
upper row and $k$ vertices $\{l_1,l_2,\ldots,l_k\}$ in the lower row. The two
rows are separated by distance $L$, and the consecutive vertices in each row are
equidistant. The choice of $L$ will be made explicit later on. Except for $u_1$,
each vertex $u_j$ in the upper row is connected to $l_{j-1}$ and $l_j$ in the
bottom row, making the blob path-connected. The configuration of such a typical
blob and its shorthand are depicted in \figref{blob}.
    
\begin{figure}[thb]
    \centering
    \begin{tikzpicture}[scale=1]
            \draw (-2,0) -- (-2,3);
            \draw (-1,0) -- (-1,3);
            \draw (0,0) -- (-0,3);
            \draw (1,0) -- (1,3);
            \draw (3,0) -- (3,3);
            \draw (-3,3) -- (-2,0);
            \draw (-2,3) -- (-1,0);
            \draw (2,3) -- (3,0);
            \draw (0,3) -- (1,0);
            \draw (3,3) -- (4,0);
            \draw (2,3) -- (2,0);
    
            \fill (-3,3) node[anchor=south] {$u_1$} circle (2pt);
            \fill (-2,0) node[anchor=south west] {$l_1$} circle (2pt); 
            \fill (-2,3) node[anchor=south] {$u_2$} circle (2pt);
            \fill (-1,0) node[anchor=south west] {$l_2$} circle (2pt); 
            \node at (-1,3) {$\ldots$};
            \fill (0,3) node[anchor=south] {$u_j$} circle (2pt);
            \fill (1,0) node[anchor=south west] {$l_j$} circle (2pt); 
            \node at (1,3) {$\ldots$};
            \node at (0,0) {$\ldots$}; 
            \node at (2,0) {$\ldots$}; 
            \fill (2,3) node[anchor=south] {$u_{k-1}$} circle (2pt);
            \fill (3,3) node[anchor=south] {$u_k$} circle (2pt);
            \fill (3,0) node[anchor=south west] {$l_{k-1}$} circle (2pt); 
            \fill (4,0) node[anchor=south west] {$l_k$} circle (2pt);
            \draw[|-|] (5,0) -- node[auto,swap] {$L$} (5,3);

            \begin{scope}[shift={(7.5,0)}]
                \fill[fill=black,draw=black] (0,0) circle (2pt) -- (-1/2,3) circle (2pt) --
                (1/2,3) circle (2pt) -- (1,0) circle (2pt) -- node[auto] {$B$}
                (0,0);
                \draw (0,3) -- node[fill=white] {$\vdots$} (0,0);
                \draw (1/2,3) -- node[fill=white] {$\vdots$} (1/2,0);
                \draw (-1/2,3) -- node[fill=white,inner sep=0] {$\ldots$} (1/2,3);
                \draw (0,0) -- node[fill=white,inner sep=0] {$\ldots$} (1,0);
                \node at (.3,1.35) {$k$};
            \end{scope}
    \end{tikzpicture}            
\caption{Left: A typical blob $B$ of size $k$ is shown. Right: The shorthand
for such a blob is depicted.}\label{fig:blob}
\end{figure}

    
We define $G$ as the graph with $3N$ many blobs $G_1,G_2,\ldots,G_{3N}$ of size
$a_1,a_2,\ldots,a_{3N}$, respectively, placed side-by-side so that they do not
overlap. Now, $H$ is defined as the graph with $N$ many blobs
$H_1,H_2,\ldots,H_N$ of size $B$ each placed side-by-side so that they do not
overlap. Now, $G$ and $H$ are placed side-by-side in a bounding-box of width $x$
and height $L$, where 
\[x=\frac{\tau}{2C_V(N+1)NB},\text{ and }L=\frac{\tau}{2C_E(N+1)}.\] We remark
that appropriately small inter-vertex and inter-blob distances can always be
chosen to fit them in the bounding-box, keeping the length of all the vertical
(resp. slanted) edges the same. See \figref{reduction} for the configuration of
the graphs.


\begin{figure}[tbh]
\centering
\begin{tikzpicture}[scale=1.2]
    \draw[|-|] (-5.5,0) -- node[auto] {$L$} (-5.5,3);
    \draw[|-|] (-5,3.3) -- node[auto] {$x$} (4,3.3);
    \begin{scope}[shift={(-4.5,0)}]
        \fill[fill=black,draw=black] (0,0) circle (2pt) -- (-1/2,3) circle (2pt) --
        (1/2,3) circle (2pt) -- (1,0) circle (2pt) -- node[auto] {$G_1$}
        (0,0);
        \draw (0,3) -- node[fill=white] {$\vdots$} (0,0);
        \draw (1/2,3) -- node[fill=white] {$\vdots$} (1/2,0);
        \draw (-1/2,3) -- node[fill=white,inner sep=0] {$\ldots$} (1/2,3);
        \draw (0,0) -- node[fill=white,inner sep=0] {$\ldots$} (1,0);
        \node at (.3,1.35) {\footnotesize$a_1$};
    \end{scope}
    \node[rotate=10] at (-3.5,1.5) {$\vdots$};
    \begin{scope}[shift={(-3,0)}]
    \fill[fill=black,draw=black] (0,0) circle (2pt) -- (-1/2,3) circle (2pt) --
    (1/2,3) circle (2pt) -- (1,0) circle (2pt) -- node[auto] {$G_i$}
    (0,0);
    \draw (0,3) -- node[fill=white] {$\vdots$} (0,0);
    \draw (1/2,3) -- node[fill=white] {$\vdots$} (1/2,0);
    \draw (-1/2,3) -- node[fill=white,inner sep=0] {$\ldots$} (1/2,3);
    \draw (0,0) -- node[fill=white,inner sep=0] {$\ldots$} (1,0);
    \node at (.3,1.35) {\footnotesize$a_i$};
    \end{scope}
    \node[rotate=10] at (-2,1.5) {$\vdots$};
    \begin{scope}[shift={(-1.5,0)}]
        \fill[fill=black,draw=black] (0,0) circle (2pt) -- (-1/2,3) circle (2pt) --
        (1/2,3) circle (2pt) -- (1,0) circle (2pt) -- node[auto] {$G_{3N}$}
        (0,0);
        \draw (0,3) -- node[fill=white] {$\vdots$} (0,0);
        \draw (1/2,3) -- node[fill=white] {$\vdots$} (1/2,0);
        \draw (-1/2,3) -- node[fill=white,inner sep=0] {$\ldots$} (1/2,3);
        \draw (0,0) -- node[fill=white,inner sep=0] {$\ldots$} (1,0);
        \node at (.27,1.35) {\footnotesize$a_{3N}$};
    \end{scope}
    \draw [decorate,decoration={brace,amplitude=5pt,mirror,raise=4ex}]
    (-4.5,0) -- (-0.5,0) node[midway,yshift=-3em]{$G$};
\begin{scope}[shift={(0,0)}]
    \fill[fill=black,draw=black] (0,0) circle (2pt) -- (-1/2,3) circle (2pt) --
    (1/2,3) circle (2pt) -- (1,0) circle (2pt) -- node[auto] {$H_1$}
    (0,0);
    \draw (0,3) -- node[fill=white] {$\vdots$} (0,0);
    \draw (1/2,3) -- node[fill=white] {$\vdots$} (1/2,0);
    \draw (-1/2,3) -- node[fill=white,inner sep=0] {$\ldots$} (1/2,3);
    \draw (0,0) -- node[fill=white,inner sep=0] {$\ldots$} (1,0);
    \node at (.3,1.35) {\footnotesize$B$};
\end{scope}
\node[rotate=10] at (1,1.5) {$\vdots$};
\begin{scope}[shift={(1.5,0)}]
        \fill[fill=black,draw=black] (0,0) circle (2pt) -- (-1/2,3) circle (2pt) --
        (1/2,3) circle (2pt) -- (1,0) circle (2pt) -- node[auto] {$H_j$}
        (0,0);
        \draw (0,3) -- node[fill=white] {$\vdots$} (0,0);
        \draw (1/2,3) -- node[fill=white] {$\vdots$} (1/2,0);
        \draw (-1/2,3) -- node[fill=white,inner sep=0] {$\ldots$} (1/2,3);
        \draw (0,0) -- node[fill=white,inner sep=0] {$\ldots$} (1,0);
        \node at (.3,1.35) {\footnotesize$B$};
\end{scope}
\node[rotate=10] at (2.5,1.5) {$\vdots$};
\begin{scope}[shift={(3,0)}]
    \fill[fill=black,draw=black] (0,0) circle (2pt) -- (-1/2,3) circle (2pt) --
    (1/2,3) circle (2pt) -- (1,0) circle (2pt) -- node[auto] {$H_N$}
    (0,0);
    \draw (0,3) -- node[fill=white] {$\vdots$} (0,0);
    \draw (1/2,3) -- node[fill=white] {$\vdots$} (1/2,0);
    \draw (-1/2,3) -- node[fill=white,inner sep=0] {$\ldots$} (1/2,3);
    \draw (0,0) -- node[fill=white,inner sep=0] {$\ldots$} (1,0);
    \node at (.3,1.35) {\footnotesize$B$};
\end{scope}
\draw [decorate,decoration={brace,amplitude=5pt,mirror,raise=4ex}]
(0,0) -- (4,0) node[midway,yshift=-3em]{$H$};
\end{tikzpicture}
\caption{Encoding an instance of $3$-PARTITION into planar graphs
$G,H$}\label{fig:reduction}
\end{figure}
    
Let us first assume that $\mathcal I$ is a YES instance, and that
$\{S_1,S_2,\ldots,S_N\}$ is a partition of $S$. A (bijective) matching
$\pi\in\Pi(G,H)$ can be defined in the following way. For any
$i\in\{1,2,\ldots,N\}$, if $S_i=\{a_{i_1},a_{i_2},a_{i_3}\}$ then the upper and
lower vertices of the blobs $G_{i_1}$, $G_{i_2}$, and $G_{i_3}$ of $G$ are
mapped, consecutively, to the corresponding upper and lower vertices of the
$i$th blob $H_i$ of $H$. We argue that $\cost(\pi)\leq\tau$. In light of
\eqnref{split-ggd}, the cost is the total contribution from the following two
types: 
\begin{enumerate}[(a)]
\item There are $(2NB-N)$ many edges in $G$, whereas there are $(2BN-3N)$ many
in $H$. So, there are exactly $2N$ many vertical edges $e$ in $G$ such that
$\pi(e)=\epsilon_V$. The resulting cost is at most $C_E\cdot2N\cdot L$.
    
\item Since no vertex in the upper row is mapped to a vertex in the lower row
and vice versa, we have
\[|u-\pi(u)|\leq x\text{ for all }u\in G.\] There are $2(\sum_i^{3N} a_i)=2NB$
many vertices in $G$, so the total cost for vertex translation is at most
$C_V\cdot x\cdot 2NB$.
\end{enumerate}
As a result, the total cost is
\[
\cost(\pi)\leq 2C_ENL+2C_VNBx=2C_EN\frac{\tau}{2C_E(N+1)}
+2C_VNB\frac{\tau}{2C_V(N+1)NB}=\tau.
\]
Hence, $\ggd(G,H)\leq\tau$.

For the other direction, we assume that $\ggd(G,H)\leq\tau$, i.e., there is a
matching $\pi\in\Pi(G,H)$ such that $\cost(\pi)\leq\tau$. We observe that
$\pi(V^G)\neq\{\epsilon_V\}$. Otherwise, from \eqnref{split-ggd} the cost of
$\pi$ would be 
\[
\cost(\pi)\geq C_E\vol(G)+C_E\vol(H)\geq C_E(4NB-4N)L
=4C_EN(B-1)\frac{\tau}{2C_E(N+1)}=\frac{2N(B-1)\tau}{N+1}>\tau.
\]
The above volume estimates use the fact that there are $(2NB-3N)$ edges in $G$
and $(2NB-N)$ edges in $H$, and the length of each edge is at least $L$. Also,
the last inequality above is strict because $2N>N+1$ for any $N>1$. Since this
is a contradiction, there must be some $u_0\in V^G$ with
$\pi(u_0)\neq\epsilon_V$. 

Moreover, we claim that $\pi:V^G\to V^H$ must be a bijection. Let us assume the
contrary, i.e., there is $u_1\in V^G$ such that $\pi(u_1)=\epsilon_V$. Since
there is at least one edge (of length at least $L$) incident to $u_1$, we then
have from \lemref{vol},
\begin{align*}
\cost(\pi) &\geq C_V|u_0-\pi(u_0)|+C_E[\vol(H)-\vol(G)]+2C_EL \\
&\geq C_V|u_0-\pi(u_0)|+C_E\times2N\times L+2C_EL \\
&=C_V|u_0-\pi(u_0)|+2C_E(N+1)L \\
&=C_V|u_0-\pi(u_0)|+2C_E(N+1)L\frac{\tau}{2C_E(N+1)} \\
&=C_V|u_0-\pi(u_0)|+\tau.
\end{align*}
Since the graphs are non-overlapping, $u_0-\pi(u_0)>0$. Hence,
$\cost(\pi)>\tau$. This is a contradiction, so $\pi$ must be a bijection.
Finally, we show that $\pi$ defines a partition of $S$ by arguing that a blob
$G_r$ of $G$ cannot split into two blobs $H_s$ and $H_t$ of $H$ when mapped by
$\pi$. If it did, there would an edge $e_0$ of $G_r$ with $\pi(e_0)=\epsilon_E$,
since the blobs $H_s$ and $H_t$ are not connected. This would lead to a
contradiction using the exact same argument just presented. Therefore, $\pi$
defines a partition of the blobs of $G$, so a partition of $S$. This completes
the proof.
\end{proof}
    
\section{Discussions and Future Work}
We have studied two notions for a similarity measure between geometric graphs.
In addition to the metric properties of $\ged$ and $\ggd$, we also establish
tight bounds in order to compare them. Although the distance measures induce
equivalent metrics on the space of geometric graphs, it is not clear which one
is better performant in practical applications. We have also shown the
hardness of computing the $\ggd$ even for planar graphs. This naturally provokes
the question of the hardness of its polynomial-time approximation. We conjecture
that for any $\alpha>1$, an $\alpha$-approximation is also $\mathcal{NP}$-hard,
i.e, PROBLEM $\ggd$ is generally $\mathcal{APX}$-hard. One can also investigate
an alternative version of the $\ged$ that is algorithmically feasible to
compute. This can probably be achieved by putting the graphs on a (Euclidean)
grid and avoiding redundant edit operations in an edit path. It also remains
unclear how to adjust the definitions of the proposed distances to incorporate
merging and splitting of vertices and edges.

\subsection*{Acknowledgments}
The authors thank Erfan Hosseini, Erin Chambers, and Elizabeth Munch for
fruitful discussions.  

{\small\bibliography{references}{}}

\begin{thebibliography}{10}

\bibitem{akpw-mca-15}
M.\ Ahmed, S.\ Karagiorgou, D.\ Pfoser, and C.\ Wenk.
\newblock {\em Map Construction Algorithms}.
\newblock Springer International Publishing, first edition, 2015.

\bibitem{bougleux_graph_2017}
S\'ebastien Bougleux, Luc Brun, Vincenzo Carletti, Pasquale Foggia, Benoit
  Ga\"uz\`ere, and Mario Vento.
\newblock Graph edit distance as a quadratic assignment problem.
\newblock {\em Pattern Recognition Letters}, 87:38--46, 2017.

\bibitem{bunke_inexact_1983}
H~Bunke and G~Allermann.
\newblock Inexact graph matching for structural pattern recognition.
\newblock {\em Pattern Recognition Letters}, 1(4):245--253, May 1983.

\bibitem{cgkss-msgg-09}
Otfried Cheong, Joachim Gudmundsson, Hyo-Sil Kim, Daria Schymura, and Fabian
  Stehn.
\newblock Measuring the {Similarity} of {Geometric} {Graphs}.
\newblock In Jan Vahrenhold, editor, {\em Experimental {Algorithms}}, volume
  5526, pages 101--112. Springer, 2009.

\bibitem{10.1145/321556.321562}
D.~G. Corneil and C.~C. Gotlieb.
\newblock An efficient algorithm for graph isomorphism.
\newblock {\em J. ACM}, 17(1):51--64, 1970.

\bibitem{FU1971155}
K.-S. Fu and P.H. Swain.
\newblock On syntactic pattern recognition.
\newblock In Julius~T. Tou, editor, {\em Computer and Information Sciences --
  1969}, volume~2 of {\em SEN Report Series Software Engineering}, pages
  155--182. Elsevier, 1971.

\bibitem{1672247}
King-Sun Fu and B.K. Bhargava.
\newblock Tree systems for syntactic pattern recognition.
\newblock {\em IEEE Transactions on Computers}, C-22(12):1087--1099, 1973.

\bibitem{10.5555/574848}
Michael~R. Garey and David~S. Johnson.
\newblock {\em Computers and Intractability; A Guide to the Theory of
  NP-Completeness}.
\newblock W. H. Freeman \& Co., USA, 1990.

\bibitem{justice_binary_2006}
D.~Justice and A.~Hero.
\newblock A binary linear programming formulation of the graph edit distance.
\newblock {\em IEEE Transactions on Pattern Analysis and Machine Intelligence},
  28(8):1200--1214, August 2006.

\bibitem{954601}
Jianzhuang Liu and Yong~Tsui Lee.
\newblock Graph-based method for face identification from a single 2d line
  drawing.
\newblock {\em IEEE Transactions on Pattern Analysis and Machine Intelligence},
  23(10):1106--1119, 2001.

\bibitem{954603}
J.~Llados, E.~Marti, and J.J. Villanueva.
\newblock Symbol recognition by error-tolerant subgraph matching between region
  adjacency graphs.
\newblock {\em IEEE Transactions on Pattern Analysis and Machine Intelligence},
  23(10):1137--1143, 2001.

\bibitem{raymond_effectiveness_2002}
John~W. Raymond and Peter Willett.
\newblock Effectiveness of graph-based and fingerprint-based similarity
  measures for virtual screening of {2D} chemical structure databases.
\newblock {\em Journal of Computer-Aided Molecular Design}, 16(1):59--71, 2002.

\bibitem{da_vitoria_lobo_iam_2008}
Kaspar Riesen and Horst Bunke.
\newblock {IAM} {Graph} {Database} {Repository} for {Graph} {Based} {Pattern}
  {Recognition} and {Machine} {Learning}.
\newblock In {\em Structural, {Syntactic}, and {Statistical} {Pattern}
  {Recognition}}, volume 5342, pages 287--297. Springer, 2008.

\bibitem{sanfeliu_distance_1983}
Alberto Sanfeliu and King-Sun Fu.
\newblock A distance measure between attributed relational graphs for pattern
  recognition.
\newblock {\em IEEE Transactions on Systems, Man, and Cybernetics},
  SMC-13(3):353--362, May 1983.

\bibitem{willett_similarity_1994}
Peter Willett.
\newblock Similarity {Searching} in {Databases} of {Three}-{Dimensional}
  {Chemical} {Structures}.
\newblock In Hans-Hermann Bock, Wolfgang Lenski, and Michael~M. Richter,
  editors, {\em Information {Systems} and {Data} {Analysis}}, pages 280--293.
  Springer, 1994.

\end{thebibliography}
\bibliographystyle{plain}

\end{document}